\documentclass[11pt,reqno]{amsart}
\usepackage[left=2cm,right=2cm,top=2cm,bottom=2cm]{geometry}
\usepackage{mathtools}
\mathtoolsset{centercolon}
\usepackage{amsmath,amsthm,amssymb,bm,bbm,dsfont}
\usepackage[mathscr]{eucal}
\usepackage{upgreek}
\usepackage{setspace}
\pagestyle{plain}
\usepackage{graphicx}
\usepackage{verbatim}
\usepackage{float}
\usepackage{placeins}
\usepackage{array}
\usepackage{booktabs}

\usepackage{threeparttable}
\usepackage[update,prepend]{epstopdf}
\usepackage{multirow}
\usepackage{amsfonts,amssymb,dsfont}
\usepackage[abs]{overpic}

\usepackage[usenames,dvipsnames]{xcolor}
\usepackage{colortbl}
\usepackage[hidelinks,bookmarks=false]{hyperref}
\hypersetup{
	bookmarks=false,         
	unicode=false,          
	pdftoolbar=true,        
	pdfmenubar=true,        
	pdffitwindow=false,     
	pdfstartview={FitH},    
	pdftitle={My title},    
	pdfauthor={Author},     
	pdfsubject={Subject},   
	pdfcreator={Creator},   
	pdfproducer={Producer}, 
	pdfkeywords={keyword1} {key2} {key3}, 
	pdfnewwindow=true,      
	colorlinks=true,        
	linkcolor=Red,          
	citecolor=ForestGreen,  
	filecolor=Magenta,      
	urlcolor=BlueViolet,    
}
\usepackage{caption, subcaption}
\usepackage{enumitem}

\makeatletter
\newcommand{\opnorm}{\@ifstar\@opnorms\@opnorm}
\newcommand{\@opnorms}[1]{%
	\left|\mkern-1.5mu\left|\mkern-1.5mu\left|
	#1
	\right|\mkern-1.5mu\right|\mkern-1.5mu\right|
}
\newcommand{\@opnorm}[2][]{%
	\mathopen{#1|\mkern-1.5mu#1|\mkern-1.5mu#1|}
	#2
	\mathclose{#1|\mkern-1.5mu#1|\mkern-1.5mu#1|}
}
\makeatother

\newcommand\xqed[1]{%
	\leavevmode\unskip\penalty9999 \hbox{}\nobreak\hfill
	\quad\hbox{#1}}
\newcommand\Endremark{\xqed{$\Diamond$}}

\makeatletter
\ifx\@NODS\undefined%

\let\Diamond=\diamondsuit
\else%
\fi
\makeatother

\makeatletter
\ifx\@NODS\undefined%

\let\mathbb=\mathds
\else%
\fi
\makeatother

\DeclareMathOperator{\Tr}{Tr}
\DeclareMathOperator{\tr}{\overline{tr}}

\DeclareMathOperator{\Ent}{Ent}

\newcommand{\ket}[1]{| #1 \rangle}

\newcommand{\be}{{\mathbf e}}

\def\0{{\mathbf{0}}}
\def\1{{\mathbf{1}}}
\def\2{{\mathbf{2}}}
\def\3{{\mathbf{3}}}
\def\4{{\mathbf{4}}}
\def\5{{\mathbf{5}}}
\def\6{{\mathbf{6}}}

\def\7{{\mathbf{7}}}
\def\8{{\mathbf{8}}}
\def\9{{\mathbf{9}}}


\def\be{\begin{equation}}
\def\ee{\end{equation}}
\def\bea{\begin{eqnarray}}
\def\eea{\end{eqnarray}}



\theoremstyle{plain}
\newtheorem{theo}{Theorem} 
\newtheorem{prop}[theo]{Proposition} 
\newtheorem{lemm}[theo]{Lemma} 
\newtheorem{coro}[theo]{Corollary} 

\theoremstyle{definition}


\theoremstyle{remark}
\newtheorem{remark}{Remark}[section]


\begin{document}
	
\let\origmaketitle\maketitle
\def\maketitle{
	\begingroup
	\def\uppercasenonmath##1{} 
	\let\MakeUppercase\relax 
	\origmaketitle
	\endgroup
}
	

\title{\bfseries \Large{Matrix Poincar\'e, $\Phi$-Sobolev Inequalities, and Quantum Ensembles}}

\author{ {Hao-Chung Cheng$^{1,2,3}$ and Min-Hsiu Hsieh$^1$}}
\address{\small  	
	$^{1}$Centre for Quantum Software and Information (UTS:Q$\ket{\text{SI}}$), \\
	Faculty of Engineering and Information Technology, University of Technology Sydney, Australia\\
	$^{2}$Graduate Institute Communication Engineering, National Taiwan University, Taiwan (R.O.C.)\\
	$^{3}$Department of Applied Mathematics and Theoretical Physics, University of Cambridge, United Kingdom
}
\email{\href{mailto:HaoChung.Ch@gmail.com}{HaoChung.Ch@gmail.com}}
\email{\href{mailto:Min-Hsiu.Hsieh@uts.edu.au}{Min-Hsiu.Hsieh@uts.edu.au}}



\begin{abstract}
Sobolev-type inequalities have been extensively studied in the frameworks of real-valued functions and non-commutative $\mathbb{L}_p$ spaces, and have proven useful in bounding the time evolution of classical/quantum Markov processes, among many other applications.  In this paper, we consider yet another fundamental setting {\textemdash} matrix-valued functions {\textemdash} and prove new Sobolev-type inequalities for them. Our technical contributions are two-fold: (i) we establish a series of matrix Poincar\'e inequalities for separably convex functions and general functions with Gaussian unitary ensembles inputs; and (ii) we derive $\Phi$-Sobolev inequalities for matrix-valued  functions defined on Boolean hypercubes and for those with Gaussian distributions. 
Our results recover the corresponding classical inequalities (i.e.~real-valued functions) when the matrix has one dimension. Finally, as an application of our technical outcomes, we derive the upper bounds for a fundamental entropic quantity {\textemdash} the Holevo quantity {\textemdash} in quantum information science since classical-quantum channels are a special instance of matrix-valued functions. This is obtained through the equivalence between the constants in the strong data processing inequality and the $\Phi$-Sobolev inequality. 
\end{abstract}

\maketitle


	
\section{Introduction} \label{sec:intro}	

\emph{Sobolev-type inequalities} generally refer to upper bounding the uncertainty of real-valued functions $f:\mathcal{X}\to \mathbb{R}$ by its \emph{Dirichlet energy}, while the \emph{hypercontractive inequality} indicates upper bounding the lower moment of $f$ by its higher moment. These functional inequalities were originally motivated by the study of the semiboundedness of certain quantum field Hamiltonians \cite{Ed64}. Substantial progress was independently made by several authors in the 1970's \cite{SH72, Gro72, Bon70,Bec75} (see also the review literature \cite{DGS92,Gro78, Gro93,Bak94, GZ03, Gro14,BGL13}). In particular,  Federbush proved that the logarithmic Sobolev (log-Sobolev) inequality implies the hypercontractivity \cite{Fed69}.  Later, Gross' groundbreaking work \cite{Gro75}  showed  the equivalence between the log-Sobolev inequality and the hypercontractive inequality. 



\noindent  Since then, log-Sobolev inequalities have become a rich topic that has influenced numerous areas of research \cite{ABC+00, Rag13b, Rag14, Gro14}. For example, hypercontractivity was applied to
bounding the local influence of Boolean functions through a global variation in theoretical computer science \cite{KKL88, BRW08, MO10}. In physics, the Poincar\'e and log-Sobolev inequalities provide useful tools for characterising the time evolution of dynamical systems. The convergence rate of a Markov process to its equilibrium can thus be determined by the constants of the Sobolev inequalities \cite{Zeg90, DS96, GZ03, BGL13}.

Motivated by the studies of quantum fermion fields \cite{Gro72, CL93, Bia97},  recent research interest has switched to these functional inequalities in the non-commutative setting. In the late 20th century, Olkiewicz and Zegarlinski completed a crucial step by generalising Gross' log-Sobolev inequalities \cite{Gro75} in non-commutative $\mathbb{L}_p$ space, and proved their equivalence with hypercontractivity under some regular conditions \cite{OZ99}.  New results and progress has thus emerged in  this line of research \cite{BZ00,Car04,CS07,AZ14,CM15} (see also \cite{Zeg16} and the references therein). Specifically, noncommutative log-Sobolev inequalities enable the analysis of the dynamics of von Neumann algebras, including their most interesting case {\textemdash} the time evolution of a single quantum state through a Markov process \cite{Lin76}.  Kastoryano and Temme derived mixing time bounds in some quantum channel semigroups \cite{KT13}. The exponential decay phenomena of various entropic quantities were later discovered \cite{TKR+10, Mon12, KT13,CKMT15, MFW15a, MFW15b,MF16}. 
Moreover, the equivalence of log-Sobolev inequalities and hypercontractivity for completely bounded norms of quantum channel semigroups was proven under certain assumptions \cite{Kin14, SK15}.

Although commutative and noncommutative Sobolev-type inequalities have succeeded in characterising the dynamics of classical and quantum Markov systems,  these functional inequalities cannot be applied to another fundamental scenario {\textemdash} the time evolution of an ensemble of matrices. This scenario includes classical-quantum channels as a special case; hence, it plays a substantial role in quantum information processing  and communication tasks since investigating the dynamics of a quantum ensemble would enable us to capture the information capability of a communication or computation process \cite{CHT15}. In this paper, we pioneer the study of Sobolev-type inequalities in the framework of ensemble matrices, and establish the corresponding matrix Poincar\'e, $\Phi$-Sobolev, and log-Sobolev inequalities. Our result can be viewed as a direct generalisation of real-valued functions $f$ in classical inequalities with matrix-valued functions $\bm{f}:\mathcal{X}\to\mathbb{C}^{d\times d}$. Consequently, we can recover the classical case when $d=1$.
Furthermore, we give a partial answer to the equivalence between hypercontractivity and the log-Sobolev inequality; we show that the matrix Bonami-Gross-Beckner inequality by Ben-Aroya {et al.} \cite{BRW08} implies the proposed matrix $\Phi$-Sobolev inequality. As an application, we show how the Holevo quantity of  a quantum ensemble changes as it evolves according to a classical Markov kernel on its classical labels and a post-selection rule. 

The proposed generalisations could have potential impacts in other research areas as well. For example, in many branches of science and engineering, observed data are more efficiently represented as matrices, such as Dyson's model of matrix-valued Brownian motion \cite{Dys62b}, collaborative filtering \cite{RS05}, group Lasso \cite{YL06}, and multi-class learning \cite{AEP08}. Thus, system performance can be determined through analysis if a random assumption is placed on the matrices \cite{FGT07}.  A recent review of modern random matrix concentration theory surveyed the most successful methods from these topics and provided interesting examples that these techniques can illuminate \cite{BLM13,Tro15, CH16a}.

\subsection{Our Results}
The contributions of this paper are two-fold. 

\begin{enumerate}
	
\item We derive various functional inequalities, including matrix Poincar\'e inequalities, matrix Sobolev inequalities, and matrix log-Sobolev inequalities. 
\begin{itemize}
\item We prove a  Poincar\'{e} inequality for matrix-valued functions in Theorem \ref{theo:Poincare} (see also Corollary~\ref{coro:Poincare_multi}), which generalises the classical Poincar\'{e} inequality \cite{Bob96, Led97}:
		\begin{equation}\label{eq_classical_poincare}
		\textnormal{Var}(f(X)) \leq \mathbb{E}\left[\left\|\nabla f(X) \right\|^2 \right],
		\end{equation}
		where   $X\triangleq (X_1,\ldots,X_n)$ denotes an independent random vector with each $X_i$ taking values in the interval $[0,1]$. Our proof, paralleling its classical counterpart, relies on the matrix-valued Efron-Stein inequality (Theorem~\ref{theo:Efron}). Both Theorem \ref{theo:Poincare} and Corollary~\ref{coro:Poincare_multi} recover the classical Poincar\'{e} inequality (\ref{eq_classical_poincare}) when the matrix dimension $d=1$. We also derive Poincar\'e inequalities for matrix-valued functions with additional assumptions of pairwise commutation (Corollary \ref{coro:Poincare_multi}). Finally, we derive a matrix Gaussian Poincar\'e inequality for Gaussian unitary ensembles (Theorem~\ref{theo:Gaussian}).
		
\item  We prove a $\mathrm{\Phi}$-Sobolev inequality  for matrix-valued functions defined on the Boolean hypercube in Theorem~\ref{theo:Sobolev}, from which we can extend to a $\mathrm{\Phi}$-Sobolev inequality for Gaussian distributions (Theorem \ref{theo:Sobolev_Gau}).  Our $\mathrm{\Phi}$-Sobolev inequality is defective (see Remark \ref{remark:LSI} for a discussion on \emph{tight} and \emph{defective} $\Phi$-Sobolev inequalities), but again  it recovers the classical $\mathrm{\Phi}$-Sobolev inequality when $d=1$. Our proof builds upon a powerful matrix Bonami-Gross-Beckner inequality \cite{BRW08}, from which the hypercontractivity inequality for matrix-valued functions on Boolean hypercubes can be obtained.
		The matrix log-Sobolev inequalities in Corollaries \ref{coro:Log} and \ref{coro:Log_Gau} follow immediately from Theorems \ref{theo:Sobolev} and \ref{theo:Sobolev_Gau}.
		
	\end{itemize}
	
	\item Finally, we connect matrix $\Phi$-entropies to quantum information theory. When $\Phi(x)=x\log x$ and the random matrix $\bm{\rho}_X\equiv\{p(x),\bm{\rho}_x\}_{x\in\mathcal{X}}$, where each $\bm{\rho}_x\succeq 0$ and $\Tr\bm{\rho}_x=1$ is a quantum ensemble, $H_{\Phi}(\bm{\rho}_X)$ is equal to the Holevo quantity $\chi(\{p,\bm{\rho}\})$ (up to a constant dimensional factor for purely technical purposes). If the ensemble $\bm{\rho}_Y\equiv\{q(y),\bm{\sigma}_y\}_{y\in\mathcal{Y}}$ is obtained by evolving $\bm{\rho}_X$ with a Markov kernel $K(y|x)$:
	\begin{eqnarray*}
		q(y) = \sum_{x} p(x) K(y|x) ,\quad \text{and} \quad
		\bm{\sigma}_y = \sum_x \bm{\rho}_x K^*(x|y) 
	\end{eqnarray*} 
	where $K^*(x|y)$ is the backward channel of $K$, then the Holevo quantity of $\chi(\{p,\bm{\rho}\})$ is bounded from above by a constant $c$ times the average Holevo quantity of the ensembles that come from post-selecting the original $\{p,\bm{\rho}\}$  by the postselection rule $K^*$. Moreover, the constant $c$ is related to the ratio of the Holevo quantities $\chi(\{p,\bm{\rho}\})$ and $\chi(\{q,\bm{\sigma}\})$ (see Proposition~\ref{prop:functional_SDPI_cqstate}). This bears a stronger form of the classical strong data processing inequality  \cite{Rag13,Rag14}.

\end{enumerate}

\subsection{Prior work}

\begin{enumerate}
	\item Very few matrix concentration results have been established for general matrix-valued functions. To the best of our knowledge, the only gem in this area is a family of polynomial Efron-Stein inequalities for random matrices \cite{PMT14}, where the theory of exchangeable pairs is used in the proof. In a previous work, we proved the subadditivity property of operator-valued $\Phi$-entropies, and derived an operator Efron-Stein inequality (see also Theorem~\ref{theo:Efron}).
	
	\item We would also like to point out that the matrix $\mathrm{\Phi}$-entropies defined in the paper are different from the entropy functions in the non-commutative $\mathbb{L}_p$ spaces discussed in \cite{CL93, Bia97, BRW08, MO10, Mon12, KT13, Kin14, CKMT15}.  
	Hence, our functional inequalities in Section~\ref{APP} are incomparable with those in the non-commutative $\mathbb{L}_p$ spaces.

\end{enumerate}

This paper is organised as follows: Section~\ref{Preliminaries} reviews the matrix algebra necessary for the remaining paper. We derive Sobolev-type inequalities for matrix-valued functions in Section \ref{APP}.  We connect matrix $\Phi$-entropies to quantum information theory and derive an upper bound for the Holevo quantity in Section~\ref{sec:FI}. Conclusions are given in Section \ref{sec:conclusion}. Appendix \ref{app_lemmas} collects useful lemmas. 

\section{Preliminaries} \label{Preliminaries}

In this section we present the background information necessary for this paper. Basic notations are introduced in Section \ref{notataion}. We then review operator algebra with a focus on \emph{Fr\'echet derivatives} and the convexity properties of matrix-valued functions in Section \ref{calculus} and \ref{Functions}, respectively.

\subsection{Notation} \label{notataion}

Given a separable Hilbert space $\mathcal{H}$, denote by $\mathbb{M}$ the Banach space of all linear operators on $\mathcal{H}$. The set $\mathbb{M}^\text{sa}$ refers to the subspace of the self-adjoint operators in $\mathbb{M}$. We denote by $\mathbb{M}^+$ (resp.~$\mathbb{M}^{++}$) the set  of positive semi-definite (resp.~positive-definite) operators in $\mathbb{M}^\text{sa}$.
If the dimension $d$ of a Hilbert space $\mathcal{H}$ needs special attention, it is highlighted in subscripts, e.g.,~$\mathbb{M}_d$ denotes the Banach space of $d\times d$ complex matrices.
The trace function $\Tr:\mathbb{M}\rightarrow \mathbb{C}$ is defined as
$\Tr\left[ \bm{M} \right] \triangleq  \sum_{k} e_k^* \bm{M} e_k \quad \text{for } \bm{M} \in\mathbb{M}$,
where $(e_k)_k$ is any orthonormal basis of $\mathcal{H}$.
If we focus on finite dimensional Hilbert spaces, then the trace function acting on $\bm{M}$ is equal to the sum of its eigenvalues.
In this paper, we introduce a normalised trace function $\tr$ for every matrix $\bm{M}\in\mathbb{M}_d$ as
$\tr \left[\bm{M} \right] \triangleq  \frac1d \Tr\left[ \bm{M} \right]$.
For $p\in [1,\infty)$, the Schatten $p$-norm of an operator $\bm{M}\in\mathbb{M}^\text{sa}$ is denoted as
$\|\bm{M}\| _{p} \triangleq  \left( \Tr |\bm{M}|^p \right)^{1/p}$,
where $|\bm{M}|\triangleq  \sqrt{\bm{M}^2}$.

We define $\mathbb{S}^n$ as the set of all mutually commuting $n$-tuple self-adjoint operators; namely, if $\vec{\bm{X}}=(\bm{X}_1,\ldots,\bm{X}_n)\in\mathbb{S}^n$, then $[\bm{X}_i,\bm{X}_j]=\mathbf{0}$ for $i \neq j\in \{1,\ldots,n\}$.
We denote by $\mathbb{S}_d^n$ the set of mutually commuting $n$-tuple $d\times d$ Hermitian matrices.
For $\bm{A},\bm{B}\in\mathbb{M}^\text{sa}$, $\bm{A}\succeq \bm{B}$ means that $\bm{A}-\bm{B}$ is positive semi-definite. Similarly, $\bm{A} \succ \bm{B}$ means $\bm{A} - \bm{B}$ is positive-definite.
We denote by $\bm{A}\preceq_\text{U} \bm{B}$, $\bm{A},\bm{B} \in \mathbb{M}^\text{sa}_d$ when $\sigma_i(\bm{A}) \leq \sigma_i(\bm{B})$, $i = 1,\ldots, d$, where $\sigma(\bm{A}) \triangleq (\sigma_1(\bm{A}),\ldots, \sigma_d(\bm{A}))$ are the eigenvalues of $\bm{A}$ arranged in decreasing order.

Throughout this paper, italic capital letters (e.g.~$\bm{X}$) are used to denote operators and the underlined italic capital letters (e.g.,~$\vec{\bm{X}}$) are used to denote a vector of, say $n$, operators.

\subsection{Matrix Calculus} \label{calculus}
In this section, we only provide information relating to the matrix calculus. For a general treatment of this topic, please refer to \cite[Section 2.1]{Ber77}, \cite[Chapter 17]{KA82}, \cite[Section X.4]{Bha97}, \cite[Section 5.3]{AH09}, and \cite[Chapter 3]{Hig08}.

Let $\mathcal{U},\mathcal{W}$ be real Banach spaces.
The {Fr\'{e}chet derivative} of a function $\bm{f}:\mathcal{U} \rightarrow \mathcal{W}$ at a point $\bm{X}\in\mathcal{U}$, if it exists\footnote{We assume the functions considered in the paper are Fr\'{e}chet differentiable. We refer readers to works, such as \cite{Pel85,Bic12} for the conditions under which  a function is Fr\'{e}chet differentiable. }, is a unique linear mapping $\mathsf{D}\bm{f}[\bm{X}]:\mathcal{U}\rightarrow\mathcal{W}$ such that
\[
\frac{\left\|\bm{f}(\bm{X}+\bm{E})-\bm{f}(\bm{X})-\mathsf{D}\bm{f}[\bm{X}](\bm{E})\right\|_{\mathcal{W}} }{\|\bm{E}\|_{\mathcal{U}} }\rightarrow 0 \quad \text{as } \bm{E}\in\mathcal{U} \,\text{ and }\, \|\bm{E}\|_{\mathcal{U}}\rightarrow \bm{0}.
\]


The Fr\'{e}chet derivative also satisfies the sum rule, the product rule, and the chain rule as in the conventional derivatives of real-valued functions (see e.g. \cite[Theorem 3.4]{Hig08}).

The \emph{partial Fr\'{e}chet derivative} of multivariate functions can be defined as follows \cite[Section 5.3]{AH09}.
Let $\mathcal{U},\mathcal{V}$ and $\mathcal{W}$ be real Banach spaces, $\bm{f}:\mathcal{U}\times \mathcal{V} \rightarrow \mathcal{W}$. 
For a fixed $\bm{v}_0 \in\mathcal{V}$, $\bm{f}(\bm{u},\bm{v}_0)$ is a function of $\bm{u}$ whose derivative at $\bm{u}_0$, if it exists, is called the partial Fr\'{e}chet derivative of $\bm{f}$ with respect to $\bm{u}$, and is denoted by $\mathsf{D}_{\bm{u}}\bm{f} [\bm{u}_0, \bm{v}_0]$.
The partial Fr\'{e}chet derivative $\mathsf{D}_{\bm{v}}\bm{f} [\bm{u}_0, \bm{v}_0]$ is defined similarly.

For any map $\bm{f}:\mathcal{U}\rightarrow \mathcal{W}$ and an operator $\bm{X}\in\mathcal{U}$, we define the induced norm of the Fr\'{e}chet derivative $\mathsf{D}\bm{f}[\bm{X}]$ as
\begin{equation} \label{eq:Frechet_norm}
\left\| \mathsf{D}\bm{f}[\bm{X}] \right\| \triangleq  \sup_{\bm{E}\neq \bm{0} } \frac{ \left\| \mathsf{D}\bm{f}[\bm{X}](\bm{E}) \right\| }{\left\|\bm{E}\right\|},
\end{equation}
where the norm can be any consistent norm (e.g.~$\left\| \mathsf{D}\bm{f}[\bm{X}] \right\|_2 = \sup_{\bm{E}\neq \bm{0} }  \left\| \mathsf{D}\bm{f}[\bm{X}](\bm{E}) \right\|_2 /  \left\|\bm{E}\right\|_2$).


\subsection{Standard Matrix Functions} \label{Functions}
For each self-adjoint and bounded operator $\bm{A}\in\mathbb{M}^\text{sa}$ 
with the spectral decomposition
$\bm{A} = \int_{\lambda \in \sigma(\bm{A})} \lambda \, \mathrm{d} \bm{E}(\lambda)$,
we define the \emph{standard matrix function} of each scalar function by $f(\bm{A}) \triangleq  \int_{\lambda \in \sigma(\bm{X})} f(\lambda) \, \mathrm{d} \bm{E}(\lambda)$.
Note that we use lowercase Roman and Greek letters to denote standard matrix functions, while the calligraphic capital letter $\bm{f}$ refers to general operator-valued functions that are not necessarily standard.


A function $f:\mathbb{R}\rightarrow \mathbb{R}$ is called \emph{operator convex} if for each $\bm{A},\bm{B}\in \mathbb{M}^\text{sa}$ and $0\leq t \leq 1$,
$
f( t\bm{A} + (1-t)\bm{B}) \preceq t f(\bm{A}) + (1-t) f(\bm{B}).
$
Similarly, a function $f$ is called \emph{operator monotone} if, for each $\bm{A}, \bm{B} \in\mathbb{M}^\text{sa}$,
$
\bm{A}\preceq \bm{B} \; \Rightarrow \; f(\bm{A})\preceq f(\bm{B}).
$

The standard matrix function can be extended into the multivariate case by considering $n$-tuples commuting self-adjoint operators, i.e.~ 
\begin{align} \label{eq:multi}
f(\vec{\bm{X}}) \triangleq  \int_{(\lambda_1,\ldots,\lambda_n)} f(\lambda_1,\ldots,\lambda_n) \, \mathrm{d} \bm{E}(\lambda_1,\ldots,\lambda_n).
\end{align}
for $\vec{\bm{X}}=(\bm{X}_1,\ldots,\bm{X}_n) \in \mathbb{S}^n$ with $\bm{X}_i=\int_{\lambda} \lambda \, \mathrm{d} \bm{E}_i(\lambda)$ and $\bm{E}(\lambda_1,\ldots,\lambda_n)\triangleq  \bm{E}(\lambda_1)\cdots\bm{E}(\lambda_n)$.
Define the (first-order) divided difference of the multivariate matrix function $f:\mathbb{R}^n\rightarrow\mathbb{R}$, for each $i=1,\ldots,n$, by using the rule:
\begin{align} \label{eq:difference}
\varphi_i(\bar{x},\bar{y})= \frac{\left(f(\bar{x}-\bar{y})\right)(x_i-y_i)}{\|\bar{x}-\bar{y}\|_2^2} \; \text{for } \bar{x}\neq \bar{y} \quad \text{and} \quad \varphi_k(\bar{x},\bar{x}) = \frac{\partial f}{\partial x_i} \left( \bar{x} \right),
\end{align}
where $\bar{x}\triangleq (x_1,\ldots,x_n)\in\mathbb{R}^n$. 
A multivariate extension of the Dalecki\u{\i} and Kre\u{\i}n formula \cite[Theorem 3.11]{Hig08} gives an alternative formula for the partial Fr\'echet derivative of a standard matrix function:
\begin{equation} \label{eq:multi_DK}
\mathsf{D}_{\bm{X}_i} f \left[\vec{\bm{X}}\right] (\vec{\bm{E}}) = \left[ \varphi_i \left(\bar{\lambda}_k,\bar{\lambda}_l \right)\right]_{kl} \odot \bm{E}_i, \quad \forall \,\vec{\bm{X}}\in\mathbb{S}^n \text{ and } \vec{\bm{E}} = (\bm{E}_1,\ldots, \bm{E}_n) \in \left( \mathbb{M}^{\text{sa}}\right)^n,
\end{equation}
where the $\{\bm{X}_i\}_{i=1}^n$ are simultaneously diagonalised with $\lambda_{ik}$ being the $k$-th eigenvalue of $\bm{X}_i$; $\bar{\lambda}_k \triangleq  (\lambda_{1k},\ldots,\lambda_{ik},\ldots,\lambda_{nk})$; and $\odot$ denotes the Hadamard product.


\section{Matrix Functional Inequalities} \label{APP}

The main results in this section include various matrix functional inequalities for matrix-valued processes.

We first present the definition of the matrix $\Phi$-entropies for random matrices from Chen and Tropp \cite{CT14}, and provide the notation for the Dirichlet forms of matrix-valued functions.
Then, we derive the \emph{matrix Poincar\'{e} inequalities} for the general multivariate matrix-valued function $\bm{f}:\left(\mathbb{M}_{d_1}^\text{sa}\right)^n \rightarrow \mathbb{M}_{d_2}^\text{sa}$ (Theorem \ref{theo:Poincare}), multivariate standard matrix functions (Corollary \ref{coro:Poincare_multi})
in Section \ref{Efron}.
We then extend the matrix Poincar\'{e} inequality to Gaussian distribution (called \emph{matrix Gaussian Poincar\'{e} inequality}, Theorem \ref{theo:Gaussian}). 
These results rely on the \emph{matrix Efron-Stein inequality}, which was first proven in our previous work \cite[Theorem 5.1]{CH16}. 

Section \ref{Sobolev} presents the results on Sobolev inequalities for matrix $\Phi$-entropies. 
The \emph{matrix $\mathnormal{\Phi}$-Sobolev inequality} of symmetric Bernoulli random variables and that of  Gaussian random variables are in Theorem \ref{theo:Sobolev} and  Theorem \ref{theo:Sobolev_Gau}, respectively.
The \emph{matrix logarithmic $\Phi$-Sobolev inequalities} of symmetric Bernoulli random variables and Gaussian random variables are given in Corollaries \ref{coro:Log} and \ref{coro:Log_Gau}.

Throughout this section, let $\vec{\bm{X}}\triangleq (\bm{X}_1,\ldots,\bm{X}_n)$ be a series of independent random variables taking values in some Polish space and let $\bm{Z}\triangleq \bm{f}(\vec{\bm{X}})\in\mathbb{M}_d^{\text{sa}}$ be a random matrix such that $\|\mathbb{E}\bm{Z}^2\|_\infty<\infty$.
Let $\bm{X}_i'$ be an independent copy of $\bm{X}_i$, for $1\leq i\leq n$,  and denote by $\widetilde{\bm{X}}^{(i)}\triangleq (\bm{X}_1,\ldots,\bm{X}_{i-1},\bm{X}_i',\bm{X}_{i+1},\ldots,\bm{X}_n)$, i.e.~replacing the $i$-th component of $\vec{\bm{X}}$ by the independent copy $\bm{X}_i'$.
Let $\bm{X}_{-i}\triangleq (\bm{X}_1,\ldots,\bm{X}_{i-1},\bm{X}_{i+1}
,\ldots,\bm{X}_n)$ and $\mathbb{E}_i[\;\cdot\;]\triangleq \mathbb{E}[\;\cdot\;|\bm{X}_{-i}]$, i.e.~expectation with respect to the $i$-th variable. Finally, denote by $\bm{Z}_i'\triangleq  \bm{f} (\widetilde{\bm{X}}^{(i)} )$ for every $i=1,\ldots,n$.

For a convex function $\Phi:\mathbb{R}\to\mathbb{R}$ such that  $\mathbb{E}\|\bm{Z}\|_\infty<\infty$ and $\mathbb{E}\|\mathrm{\Phi}(\bm{Z})\|_\infty<\infty$,
The matrix $\mathrm{\Phi}$-entropy $H_\mathrm{\Phi}(\bm{Z})$ is defined as
\[ 
H_\mathrm{\Phi}(\bm{Z})\triangleq \tr\left[\mathbb{E}\mathrm{\Phi}(\bm{Z})-\mathrm{\Phi}(\mathbb{E}\,\bm{Z})\right].
\]

The (matrix-valued) \emph{local influence} is defined as
\[
\bm{\mathcal{I}}_i(\bm{f}|\bm{X}_{-i}) \triangleq  \frac12 \mathbb{E}\left[ \left( \bm{f}(\vec{\bm{X}})-\bm{f}\left(\widetilde{\bm{X}}^{(i)}\right)\right)^2\right],
\]
for every $i\in1,\ldots, n$. This quantity characterises the local fluctuation of the $i$-th position on average.
The matrix-valued and real-valued \emph{total influence} of the function $\bm{f}$ are defined as
\begin{align} \label{eq:engergy}
\bm{\mathcal{E}}(\bm{f}) \triangleq \sum_{i=1}^n \mathbb{E}\left[ \bm{\mathcal{I}}_i ( \bm{f}| \bm{X}_{-i})\right], \quad \text{and} \quad \mathcal{E}(\bm{f}) \triangleq \tr\bm{\mathcal{E}}(\bm{f}). 
\end{align}
Note that we use the notation $\bm{\mathcal{E}}(\bm{f})$ and $\bm{\mathcal{E}}(\bm{Z})$ interchangeably.
This quantity is also known as the \emph{Dirichlet form} or \emph{energy functional} in Markov semigroup theory (see e.g.~\cite{BGL13} and \cite{CHT15}).
The total influence $\mathcal{E}(\bm{Z})$ has the following equivalent expressions (Lemma \ref{lemm:plus}):  
\begin{align} 
\bm{\mathcal{E}}(\bm{Z}) = \frac12 \sum_{i=1}^n \mathbb{E}\left[\left(\bm{Z}-\bm{Z}_i'\right)^2\right]
= \sum_{i=1}^n \mathbb{E} \left[ \left( \bm{Z}-\mathbb{E}_i \bm{Z} \right)^2 \right] 
=\sum_{i=1}^n \mathbb{E}\left[\left(\bm{Z}-\bm{Z}_i'\right)_+^2\right], \label{eq:Efron_+}
\end{align}
where, for $f(x) = \max\{x,0\}$, $(\bm{A})_+=f(\bm{A})$ denotes the contribution from its positive eigenvalues.

\subsection{The Matrix Poincar\'{e} Inequality} \label{Efron}

We denote the matrix-valued and real-valued variance of a random matrix $\bm{A}$ (taking values in $\mathbb{M}_d^{sa}$) by
\[
\textbf{{Var}}(\bm{A}) \triangleq   \left[\mathbb{E}\left( \bm{A} - \mathbb{E}\bm{A} \right)^2\right] = \left[\mathbb{E}\bm{A}^2 - \left(\mathbb{E}\bm{A}\right)^2 \right], \quad \text{and}\quad
\text{Var} (\bm{A})\triangleq \tr \textbf{Var}(\bm{A}).
\]
In the following, we recall the  result from our previous work \cite{CH16} and Eq.~\eqref{eq:Efron_+} to  present a matrix Efron-Stein inequality, which plays a major role in the matrix Poincar\'e inequality. 



\begin{theo}
	[Matrix Efron-Stein Inequality {\cite[Theorem 5.1]{CH16}}] \label{theo:Efron}
	With the prevailing assumptions, we have
	\begin{equation}\label{eq:Efron}
	\textnormal{\textbf{Var}}(\bm{Z}) \preceq \bm{\mathcal{E}}(\bm{Z})= \sum_{i=1}^n \mathbb{E}\left[\left(\bm{Z}-\bm{Z}_i'\right)_+^2\right].
	\end{equation}
\end{theo}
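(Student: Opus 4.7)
The plan is to specialize the subadditivity of matrix $\Phi$-entropies (Theorem \ref{theo:sub}) to the function $\Phi(x)=x^2$, and then translate the resulting bound into the three equivalent forms of $\mathcal{E}(\bm{Z})$ via Lemma \ref{lemm:plus}. First I would check that $\Phi(x)=x^2$ lies in the class $\mathbf{(C2)}$: it is continuous, convex and twice differentiable, and because $\Psi(t)=\Phi'(t)=2t$ has Fr\'echet derivative $\mathsf{D}\Psi[\bm{A}](\bm{h})=2\bm{h}$, the map $\bm{A}\mapsto(\mathsf{D}\Psi[\bm{A}])^{-1}$ is the constant linear map $\bm{h}\mapsto \bm{h}/2$, which is trivially operator concave. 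Since $\bm{Z}\in\mathbb{M}_d^{\mathrm{sa}}$ is only self-adjoint in this section, I would replace $\bm{Z}$ by $\bm{Z}+c\bm{I}$ for a constant $c$ with $\bm{Z}+c\bm{I}\succeq\bm{0}$; both sides of \eqref{eq:Efron} are invariant under translation by $c\bm{I}$, so the inequality for the self-adjoint case follows from the positive semi-definite case.

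Next, with $\Phi(x)=x^2$, a direct calculation using $\tr\mathbb{E}[\bm{Z}\cdot\mathbb{E}_i\bm{Z}] = \tr\mathbb{E}[(\mathbb{E}_i\bm{Z})^2]$ (tower property, since $\mathbb{E}_i\bm{Z}$ is $\mathbf{X}_{-i}$-measurable) gives
\begin{align*}
H_\Phi(\bm{Z}) &= \tr\mathbb{E}[\bm{Z}^2]-\tr[(\mathbb{E}\bm{Z})^2] \;=\; \Var(\bm{Z}),\\
\mathbb{E}\,H_\Phi^{(i)}(\bm{Z}) &= \tr\mathbb{E}\bigl[\bm{Z}^2-(\mathbb{E}_i\bm{Z})^2\bigr] \;=\; \tr\mathbb{E}\bigl[(\bm{Z}-\mathbb{E}_i\bm{Z})^2\bigr].
\end{align*}
Plugging these into Theorem \ref{theo:sub} yields the intermediate bound
\[
\Var(\bm{Z}) \;\leq\; \sum_{i=1}^n \tr\mathbb{E}\bigl[(\bm{Z}-\mathbb{E}_i\bm{Z})^2\bigr],
\]
which is the matrix Efron--Stein inequality phrased in the conditional-expectation form.

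Finally, I would invoke Lemma \ref{lemm:plus} (the equivalences displayed in \eqref{eq:Efron_+}) to rewrite the right-hand side in the symmetrized form $\frac12\sum_i\tr\mathbb{E}[(\bm{Z}-\bm{Z}'_i)^2]$ and then as $\sum_i\tr\mathbb{E}[(\bm{Z}-\bm{Z}'_i)_+^2]$. The last step uses the exchangeability of $\bm{Z}$ and $\bm{Z}'_i$ (since $\bm{X}_i$ and $\bm{X}'_i$ are i.i.d.\ conditionally on $\mathbf{X}_{-i}$): this implies $\tr\mathbb{E}[(\bm{Z}-\bm{Z}'_i)_+^2]=\tr\mathbb{E}[(\bm{Z}-\bm{Z}'_i)_-^2]$, so each accounts for exactly half of $\tr\mathbb{E}[(\bm{Z}-\bm{Z}'_i)^2]$.

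The only conceptual point requiring care is the passage from the positive semi-definite hypothesis of Theorem \ref{theo:sub} to general self-adjoint $\bm{Z}$; the translation trick above handles it cleanly because $\Phi(x)=x^2$ produces a variance that is blind to additive shifts by $c\bm{I}$. Everything else is bookkeeping: the computation of $H_{x^2}(\bm{Z})$ and $H_{x^2}^{(i)}(\bm{Z})$ is routine, and the equivalences in \eqref{eq:Efron_+} are already packaged in Lemma \ref{lemm:plus}.
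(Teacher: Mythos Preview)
Your proposal is correct and follows essentially the same approach as the paper: specialize the subadditivity of matrix $\Phi$-entropies (Theorem~\ref{theo:sub}) to $\Phi(x)=x^2$, identify $H_{x^2}(\bm{Z})=\Var(\bm{Z})$ and $H_{x^2}^{(i)}(\bm{Z})=\tr\mathbb{E}_i[(\bm{Z}-\mathbb{E}_i\bm{Z})^2]$, and then invoke Lemma~\ref{lemm:plus} for the equivalent expressions of $\mathcal{E}(\bm{Z})$. In fact you are slightly more careful than the paper on two points it glosses over---the explicit verification that $x^2\in\textbf{(C2)}$ and the translation trick to pass from $\bm{Z}\in\mathbb{M}_d^{\mathrm{sa}}$ to $\bm{Z}\in\mathbb{M}_d^{+}$ so that Theorem~\ref{theo:sub} applies.
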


\begin{remark}
	In addition to Eq.~(\ref{eq:Efron_+}), the quantity $\bm{\mathcal{E}}(\bm{Z})$ also has the following minimum representation:
	\begin{align} \label{eq:Efron_min} 
	\bm{\mathcal{E}}(\bm{Z})=  \sum_{i=1}^n \min_{\bm{Z}_i} \mathbb{E} \left[\left(\bm{Z}-\bm{Z}_i\right)^2\right],
	\end{align}
	where the minimum is taken over the class of all $(\bm{X}_{-i})$-measurable random matrix $\bm{Z}_i$ such that $\|\mathbb{E}\bm{Z}_i^2\|_\infty<\infty$.  The identity follows from the fact that
	$
	\textnormal{\textbf{Var}}\left(\bm{A} \right) = \min_{\bm{u}\in\mathbb{M}_d^{sa}} \mathbb{E}\left[\left(\bm{A-u}\right)^2\right],
	$ for any random matrix $\bm{A}$ taking values in $\mathbb{M}_d^{sa}$.
	Therefore, for every $i=1,\ldots,n$,
	$
	\textnormal{Var}^{(i)}\left(\bm{Z} \right) = \min_{\bm{u}}  \mathbb{E}_i\left[\left(\bm{Z}-\bm{u}\right)^2\right],
	$
	where the infimum is taken over the class of all $(\bm{X}_{-i})$-measurable and square-integrable matrices $\bm{u}$. Note that the minimum is attained as $\bm{u}=\mathbb{E}_i\bm{Z}$.
	\Endremark
\end{remark}


The matrix Efron-Stein inequality can be used to prove a matrix version of the Poincar\'{e} inequality.

\begin{theo}[Matrix Poincar\'{e} Inequality] \label{theo:Poincare}
	Let  $\vec{\bm{X}}=(\bm{X}_1,\ldots,\bm{X}_n)\in (\mathbb{M}_{d_1}^\textnormal{sa})^n$ be an $n$-tuple independent random matrix taking values in the interval $[\bm{0},\bm{I}]$ (under the L\"{o}wner partial ordering) and let $\bm{f}:(\mathbb{M}_{d_1}^\textnormal{sa}([0,1]))^n\rightarrow \mathbb{M}_{d_2}^\textnormal{sa}$ be a separately  convex function\footnote{Note that $\bm{f}$ here is a multivariate super-operator. The separate convexity means that: for $ 0\leq t\leq 1$, 
		\[
		t \bm{f}\left(\bm{Y}\right) + (1-t) \bm{f}\left(\widetilde{\bm{Y}}^{(i)} \right) \preceq \bm{f}\left( t\vec{\bm{Y}} + (1-t) \widetilde{\bm{Y}}^{(i)} \right)
		\]
		for $\vec{\bm{Y}}=(\bm{Y}_1,\ldots,\bm{Y}_n)\in \left(\mathbb{M}_d^{sa}\right)^n $, and $\widetilde{\bm{Y}}^{(i)} = (\bm{Y}_1,\ldots,\bm{Y}_{i-1},\bm{Y}_i',\bm{Y}_{i+1},\ldots,\bm{Y}_n)\in\left(\mathbb{M}_d^{sa}\right)^n$ .
		The separate monotonicity is defined similarly.
	}		
	with finite partial Fr\'{e}chet derivatives.
	Then $\bm{f}(\vec{\bm{X}})=\bm{f}(\bm{X}_1,\ldots,\bm{X}_n)$ satisfies
	\begin{align} \label{eq:Poincare}
	\textnormal{Var}\left(\bm{f}\left(\vec{\bm{X}}\right)\right) \leq \sum_{i=1}^n \mathbb{E} \left[ \left\| \mathsf{D}_{\bm{X}_i}\bm{f} \left[\vec{\bm{X}}\right]\right\|_2^2 \right],
	\end{align}
	where $\| \mathsf{D}_{\bm{X}_i}\bm{f} [\vec{\bm{X}} ] \|_2$ is the norm of the Fr\'{e}chet derivative defined in Eq.~\eqref{eq:Frechet_norm}.
	
	Moreover, if $\bm{f}$ is separately monotone decreasing, then Eq.~\eqref{eq:Poincare} can be strengthened to
	\begin{align} \label{eq:Poincare2}
	\textnormal{\textbf{Var}}\left(\bm{f}\left(\vec{\bm{X}}\right)\right) 
	\preceq_\textnormal{U} \sum_{i=1}^n \mathbb{E} \left[  \left(  \mathsf{D}_{\bm{X}_i}\bm{f} \left[\vec{\bm{X}}\right](\bm{I}) \right)^2 \right].
	\end{align} 
\end{theo}

\begin{proof}
	Recall $\bm{Z}\equiv\bm{f}\left(\vec{\bm{X}}\right)$ and $\bm{Z}_i\equiv\bm{f}\left(\widetilde{\bm{X}}^{(i)}\right)=\bm{f}(\bm{X}_1,\ldots,\bm{X}_{i-1},\bm{X}_i',\bm{X}_{i+1},\ldots,\bm{X}_n)$, where $\bm{X}_i'$ is an identical copy of $\bm{X}_i$. The proof follows from the matrix Efron-Stein inequality (Theorem \ref{theo:Efron}):
	\begin{align} \label{eq:temp_Poincare}
	\textbf{Var}\left(\bm{f}\left(\vec{\bm{X}}\right)\right) = \textbf{Var}\left(\bm{Z}\right) \preceq \sum_{i=1}^n \mathbb{E}  \left( \bm{Z}- \bm{Z}_i \right)_+^2.
	\end{align}
	It then suffices to bound each term $\mathbb{E}  \left( \bm{Z}-\bm{Z}_i \right)_+^2$ on the right-hand side above:
	\begin{align}
	\left(\bm{Z}- \bm{Z}_i \right)_+^2 
	=  \left( \bm{f}\left(\vec{\bm{X}}\right) - \bm{f}\left(\widetilde{\bm{X}}^{(i)}\right) \right)_+^2 
	&\preceq_\textnormal{U}  \left( \mathsf{D}_{\bm{X}_i} \bm{f}\left[\vec{\bm{X}}\right]\left( \bm{X}_i - \bm{X}_i' \right) \right)_+^2 \label{eq:temp_Poincare3} \\
	 &\preceq_\text{U}  \left( \mathsf{D}_{\bm{X}_i} \bm{f}\left[\vec{\bm{X}}\right]\left( \bm{I} \right) \right)^2, \label{eq:temp_Poincare2}
	\end{align}
	where the first inequality follows from the separate convexity (see e.g. \cite[Chapter 3]{Fle80}):
		\begin{equation}\label{eq_separate}
		\bm{f}\left(\vec{\bm{X}}\right) - \bm{f}\left(\widetilde{\bm{X}}^{(i)}\right) \preceq 
		\mathsf{D}_{\bm{X}_i} \bm{f}\left[\vec{\bm{X}}\right]\left( \bm{X}_i - \bm{X}_i' \right),
		\end{equation}
	as well as the fact that $\bm{A}_+\preceq\bm{B}_+$ implies $\bm{A}_+^2\preceq_\text{U}\bm{B}_+^2$.
	The second line is due to the separately monotone decreasing property: 
	$\mathsf{D}_{\bm{X}_i} \bm{f}[\vec{\bm{X}}]\left( \bm{A}\right) \preceq 0$ for all $\bm{A}\succeq 0$.	
	Hence $( \mathsf{D}_{\bm{X}_i} \bm{f}[\vec{\bm{X}}]( \bm{X}_i - \bm{X}_i' ) )_+ \preceq   -\mathsf{D}_{\bm{X}_i} \bm{f}[\vec{\bm{X}}]( \bm{I} ) $, which proving Eq.~\eqref{eq:Poincare2}.
	
	If $\bm{f}$ is not separately monotone decreasing, Eq.~\eqref{eq:temp_Poincare2} does not generally hold. 
	To show Eq.~\eqref{eq:Poincare}, we take the normalised trace on both sides of Eq.~\eqref{eq:temp_Poincare3}:
	\begin{align*}
	\tr \left(\bm{Z}- \bm{Z}_i \right)_+^2 
	&\leq \tr \left( \mathsf{D}_{\bm{X}_i} \bm{f}\left[\vec{\bm{X}}\right]\left( \bm{X}_i - \bm{X}_i' \right) \right)_+^2
	\leq \tr \left| \mathsf{D}_{\bm{X}_i} \bm{f}\left[\vec{\bm{X}}\right]\left( \bm{X}_i - \bm{X}_i' \right) \right|^2\\
	&=  \frac{1}{d_2}\left\| \mathsf{D}_{\bm{X}_i} \bm{f}\left[\vec{\bm{X}}\right]\left( \bm{X}_i - \bm{X}_i' \right)  \right\|_2^2 
	\leq \frac{1}{d_2}\left\| \mathsf{D}_{\bm{X}_i}\bm{f}\left[\vec{\bm{X}}\right]\right\|_2^2 \cdot \left\|\bm{X}_i-\bm{X}_i' \right\|_2^2\\
	&\leq \frac{1}{d_2}\left\| \mathsf{D}_{\bm{X}_i}\bm{f}\left[\vec{\bm{X}}\right]\right\|_2^2 \cdot \left\| \bm{I}  \right\|_2^2
	= \left\| \mathsf{D}_{\bm{X}_i}\bm{f}\left[\vec{\bm{X}}\right]\right\|_2^2.
	\end{align*}
	The equality in the second line follows from the definition of Schatten $2$-norm.
	The inequality in the second line follows directly from the norm of Fr\'{e}chet derivatives, i.e.~
	$\left\| \mathsf{D} f[\bm{A}](\bm{B})\right\|_2
	\leq \left\| \mathsf{D} f[\bm{A}]\right\|_2 \cdot \| \bm{B} \|_2$.
	Finally, we use the assumption $\bm{0} \preceq \bm{X}_i,\bm{X}_i'\preceq \bm{I}$ and $\|\bm{I}\|_2=\sqrt{d_2}$ in the third line to complete the proof.
\end{proof}

Note that Theorem \ref{theo:Poincare} generalises the classical Poincar\'{e} inequality (e.g.~\cite[Theorem 3.17]{BLM13}):
\[
\textnormal{Var}(f(X)) \leq \mathbb{E}\left[\left\|\nabla f(X) \right\|^2 \right],
\]
where $X\triangleq (X_1,\ldots,X_n)$ denotes an independent random vector and each element takes values in the interval $[0,1]$.


Theorem \ref{theo:Poincare} considers the matrix Poincar\'{e} inequality for general matrix functions $\bm{f}:\left(\mathbb{M}_d^\text{sa}\right)^n\rightarrow \mathbb{M}_d^\text{sa}$, while in Corollary \ref{coro:Poincare_multi} below, 
we impose additional pairwise commutative criteria on $\vec{\bm{X}}=(\boldsymbol{X}_1,\ldots,\boldsymbol{X}_n)$; namely, $[\bm{X}_i, \bm{X}_j]=\bm{0}$ almost surely for $i\neq j \in[n]$. We establish the following Matrix Poincar\'e inequality for \emph{multivariate standard matrix functions (see~Eq.~\eqref{eq:multi})}.
\begin{coro}
	[Matrix Poincar\'{e} Inequality for Multivariate Standard Matrix Functions] \label{coro:Poincare_multi}
	Let  $\vec{\bm{X}}=(\boldsymbol{X}_1,\ldots,\boldsymbol{X}_n)$ be an $n$-tuple independent random matrix taking values in $\mathbb{S}_d^n$ with joint spectrum in $[0,1]^n$.
	Let $f:([0,1])^n\rightarrow \mathbb{R}$ be a multivariate standard matrix function that is separately operator convex and has finite partial Fr\'{e}chet derivatives.
	Then, $f\left(\vec{\bm{X}}\right)=f(\boldsymbol{X}_1,\ldots,\boldsymbol{X}_n)$ satisfies
	\[
	\textnormal{\textbf{Var}}\left(f\left(\vec{\bm{X}}\right)\right) \preceq_\textnormal{U} \sum_{i=1}^n \mathbb{E}  \left[ \left( \left[ \varphi_i \left(\bar{\lambda}_k,\bar{\lambda}_\ell \right)\right]_{k\ell} \odot \bm{I} \right)^2 \right],
	\]	
	where $\varphi_i$ is the divided difference of $f$ defined in Eq.~\eqref{eq:difference}, and $\bar{\lambda}_k\triangleq (\lambda_{1k}, \ldots, \lambda_{ik}, \ldots, \lambda_{nk})$ with $\lambda_{ik}$ being the $i$-th eigenvalue of $\bm{X}_k$.
\end{coro}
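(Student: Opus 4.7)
The plan is to apply Theorem~\ref{theo:Poincare} to the multivariate standard matrix function $\mathcal{L}(\mathbf{X}) \triangleq f(\mathbf{X})$ (viewed as a super-operator with $d_1 = d_2 = d$), and then convert the resulting bound on the Fr\'echet-derivative norms into the sup-norm of the divided-difference matrix via the representation \eqref{eq:multi_DK}. Since $f$ is separately operator convex by hypothesis, and $\mathbf{X}$ takes values in $\mathbb{S}_d^n$ with joint spectrum in $[0,1]^n$ (so that $\bm{0}\preceq\bm{X}_i\preceq\bm{I}$), Theorem~\ref{theo:Poincare} applies and delivers
\[
\textnormal{Var}(f(\mathbf{X})) \;\leq\; \sum_{i=1}^n \mathbb{E}\,\bigl\|\mathsf{D}_{\bm{X}_i} f[\mathbf{X}]\bigr\|_2^2.
\]

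The remaining work is entirely computational. By the multivariate Dalecki{\u{\i}}-Kre{\u{\i}}n-type formula \eqref{eq:multi_DK}, expressed in the basis that simultaneously diagonalises $\mathbf{X}$, one has
\[
\mathsf{D}_{\bm{X}_i} f[\mathbf{X}](\bm{E}) \;=\; \bigl[\varphi_i(\bar\lambda_k,\bar\lambda_\ell)\bigr]_{k\ell}\odot\bm{E}, \qquad \bm{E}\in\mathbb{M}_d^{\mathrm{sa}}.
\]
A one-line entry-wise estimate then yields the Schur-product bound
\[
\|\bm{A}\odot\bm{E}\|_2^2 \;=\; \sum_{k,\ell}|A_{k\ell}|^2\,|E_{k\ell}|^2 \;\leq\; \|\bm{A}\|_{\mathrm{sup}}^2\,\|\bm{E}\|_2^2,
\]
whence $\bigl\|\mathsf{D}_{\bm{X}_i} f[\mathbf{X}]\bigr\|_2 \leq \bigl\|[\varphi_i(\bar\lambda_k,\bar\lambda_\ell)]_{k\ell}\bigr\|_{\mathrm{sup}}$. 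Substituting into the previous display and taking expectations completes the argument.

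The main obstacle is a mild hypothesis-checking subtlety rather than a hard estimate. Theorem~\ref{theo:Poincare} replaces $\bm{X}_i$ by an independent copy $\bm{X}_i'$ and uses that $\mathcal{L}$ is defined at the perturbed tuple $\widetilde{\mathbf{X}}^{(i)}$. Here $\mathcal{L}=f$ is a multivariate standard matrix function, defined only on $\mathbb{S}_d^n$, and an arbitrary i.i.d.\ copy $\bm{X}_i'$ need not commute with $\{\bm{X}_j\}_{j\neq i}$. The natural fix is to realise $\bm{X}_i'$ within the commutative $\ast$-algebra generated by $\mathbf{X}$: condition on $\mathbf{X}_{-i}$, fix a joint eigenbasis of $\mathbf{X}$, and sample $\bm{X}_i'$ as the matrix diagonal in that basis whose eigenvalue vector is an independent copy of that of $\bm{X}_i$. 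Under this canonical coupling $\widetilde{\mathbf{X}}^{(i)}\in\mathbb{S}_d^n$ almost surely, the segment joining $\mathbf{X}$ and $\widetilde{\mathbf{X}}^{(i)}$ stays in $\mathbb{S}_d^n$, separate operator convexity of $f$ applies along it, and the proof of Theorem~\ref{theo:Poincare} transfers without further change.
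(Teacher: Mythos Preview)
Your proposal is correct and follows essentially the same approach as the paper: Efron--Stein via Theorem~\ref{theo:Poincare}, the multivariate Dalecki\u{\i}--Kre\u{\i}n formula~\eqref{eq:multi_DK} to realise $\mathsf{D}_{\bm{X}_i}f[\mathbf{X}]$ as a Schur multiplier, and the Schur-product norm bound $\|\bm{A}\odot\bm{E}\|_2\le\|\bm{A}\|_{\mathrm{sup}}\|\bm{E}\|_2$. The paper reruns the Efron--Stein estimate inline (bounding $\tr(\bm{Z}-\bm{Z}_i)_+^2$ directly) rather than citing Theorem~\ref{theo:Poincare} as a black box, but the content is identical; your version is arguably cleaner, and you are more explicit than the paper about the commutation issue for the independent copy $\bm{X}_i'$.
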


\begin{proof}
	Following the same argument in the proof of Theorem \ref{theo:Poincare}, we have
	\begin{align*}
	\mathsf{D}_{\bm{X}_i} f[\vec{\bm{X}}]\left( \bm{X}_i - \bm{X}_i' \right)   
	= \left[ \varphi_i \left(\bar{\lambda}_k,\bar{\lambda}_\ell \right)\right]_{k\ell} \odot \left(\bm{X}_i-\bm{X}_i'\right) \preceq  \left[ \varphi_i \left(\bar{\lambda}_k,\bar{\lambda}_\ell \right)\right]_{k\ell} \odot \bm{I},
	\end{align*}	
	where the equality follows from the multivariate version of Dalecki\u{\i} and Kre\u{\i}n formula \eqref{eq:multi_DK}, and the inequality is a direct consequence of Schur's product theorem: $\bm{A}\odot \bm{B} \leq \bm{A}\odot \bm{I} \cdot \|\bm{B}\|_\infty$ \cite{Sch11}.
\end{proof}

The matrix Efron-Stein inequality is used in Theorem \ref{theo:Poincare} to prove the matrix Poincar\'{e} inequality.
Next we will show that the matrix Efron-Stein inequality can be also applied to establish an upper bound, known as the \emph{Gaussian Poincar\'{e} inequality}, for a Fr\'{e}chet differentiable matrix-valued function of \emph{Gaussian Unitary Ensembles (GUE)}\footnote{The Gaussian Unitary Ensembles are a family of random Hermitian matrices whose upper-triangular entries are independently and identically distributed (i.i.d.)~complex standard Gaussian random variables, while the diagonal entries are i.i.d.~real standard Gaussian random variables, see e.g.~\cite[\S 2.6]{Tao12}}.

\begin{theo}[Matrix Poincar\'{e} Inequality for GUE] \label{theo:Gaussian}
	Let $\bm{X}_1,\ldots,\bm{X}_n$ be independent random matrices\footnote{We consider ``entry-wise" independence here.} from the Gaussian Unitary Ensemble
	and let $\bm{f}:\left(\mathbb{M}_{d_1}^\textnormal{sa}\right)^n\rightarrow \mathbb{M}_{d_2}^\textnormal{sa}$ be any twice Fr\'{e}chet differentiable function.
	Then $\bm{f}\left(\vec{\bm{X}}\right)$ satisfies
	\[
	\textnormal{Var}\left(\bm{f}\left(\vec{\bm{X}}\right)\right) 
	\leq \sum_{i=1}^n \mathbb{E} \left[ \big\| \mathsf{D}_{\bm{X}_i}\bm{f}\left[\vec{\bm{X}}\right]\big\|_2^2 \right].
	\]	
\end{theo}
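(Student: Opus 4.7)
The plan is to reduce the Gaussian setting to the bounded Efron--Stein framework already developed, via a central limit approximation, and then pass to the limit. For each index $i$, I would write the GUE matrix $\bm{X}_i$ as the distributional limit of $\bm{X}_i^{(m)}\triangleq m^{-1/2}\sum_{j=1}^{m}\bm{Y}_{i,j}^{(m)}$, where $\{\bm{Y}_{i,j}^{(m)}\}$ are i.i.d.\ zero-mean bounded Hermitian random matrices whose entry-wise covariance matches that of a standard GUE matrix (for instance, appropriately scaled Rademacher analogues on each independent real/imaginary coordinate). The multivariate CLT gives $\bm{X}_i^{(m)}\Rightarrow \bm{X}_i$, and the twice Fr\'echet differentiability of $\mathcal{L}$ (so $\mathcal{L}$ and $\mathsf{D}\mathcal{L}$ are continuous with polynomial growth on bounded sets) together with Gaussian moment bounds will yield $\Var(\mathcal{L}(\mathbf{X}^{(m)}))\to \Var(\mathcal{L}(\mathbf{X}))$ and $\mathbb{E}\|\mathsf{D}_{\bm{X}_i}\mathcal{L}[\mathbf{X}^{(m)}]\|_2^2 \to \mathbb{E}\|\mathsf{D}_{\bm{X}_i}\mathcal{L}[\mathbf{X}]\|_2^2$ after verifying uniform integrability.

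Next, I would view $\bm{Z}^{(m)}\triangleq \mathcal{L}(\mathbf{X}^{(m)})$ as a function of the $nm$ independent matrices $\{\bm{Y}_{i,j}^{(m)}\}$ and apply Theorem~\ref{theo:Efron}:
\[
\Var(\bm{Z}^{(m)}) \le \tfrac{1}{2}\sum_{i=1}^{n}\sum_{j=1}^{m}\tr\mathbb{E}\bigl[(\bm{Z}^{(m)}-\bm{Z}^{(m)}_{(i,j)})^{2}\bigr],
\]
where $\bm{Z}^{(m)}_{(i,j)}$ denotes $\mathcal{L}$ after replacing $\bm{Y}_{i,j}^{(m)}$ by an independent copy $\bm{Y}'_{i,j}$. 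Because a single swap perturbs $\bm{X}_i^{(m)}$ by $m^{-1/2}(\bm{Y}'_{i,j}-\bm{Y}_{i,j}^{(m)})$, twice Fr\'echet differentiability together with the chain rule (Proposition~\ref{prop:properties}) yields the Taylor expansion
\[
\bm{Z}^{(m)}-\bm{Z}^{(m)}_{(i,j)} = \frac{1}{\sqrt{m}}\,\mathsf{D}_{\bm{X}_i}\mathcal{L}[\mathbf{X}^{(m)}](\bm{Y}_{i,j}^{(m)}-\bm{Y}'_{i,j}) + \bm{R}_{i,j}^{(m)},
\]
with a second-order remainder $\bm{R}_{i,j}^{(m)}$ of Frobenius size $O(1/m)$. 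Mimicking the manipulation used in the proof of Theorem~\ref{theo:Poincare}, the induced Fr\'echet-norm bound in Eq.~\eqref{eq:Frechet_norm} then controls each summand by $\tfrac{1}{m d_2}\|\mathsf{D}_{\bm{X}_i}\mathcal{L}[\mathbf{X}^{(m)}]\|_{2}^{2}\cdot \|\bm{Y}_{i,j}^{(m)}-\bm{Y}'_{i,j}\|_{2}^{2}$, up to a contribution that vanishes once summed.

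Summing over $j$ and using the variance-matching construction of $\bm{Y}_{i,j}^{(m)}$, the average $\tfrac{1}{m}\sum_{j}\mathbb{E}\|\bm{Y}_{i,j}^{(m)}-\bm{Y}'_{i,j}\|_{2}^{2}$ converges to the appropriate GUE normalization, so that passing $m\to\infty$ and combining with the convergence results of Step~1 delivers the desired inequality. The main obstacle is this last step: careful bookkeeping of the Schatten $2$-norm, the normalized trace, and the GUE entrywise variance is required because the induced-norm bound on $\mathsf{D}\mathcal{L}$ is dimension-sensitive, and one needs a quantitative continuity statement for $\mathbf{X}\mapsto \|\mathsf{D}_{\bm{X}_i}\mathcal{L}[\mathbf{X}]\|_2^2$ (coming from the second Fr\'echet derivative together with Gaussian tails) to justify taking the limit inside the expectation. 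A Gaussian interpolation (Ornstein--Uhlenbeck semigroup) argument provides an alternative route with potentially cleaner constants, but is philosophically farther from the subadditivity/Efron--Stein line this paper develops.
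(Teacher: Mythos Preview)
Your proposal is correct and follows essentially the same strategy as the paper: approximate each GUE matrix by a normalized sum of bounded i.i.d.\ Hermitian matrices, apply the matrix Efron--Stein inequality (Theorem~\ref{theo:Efron}) to the resulting function of many independent coordinates, Taylor-expand the single-coordinate increments, and pass to the limit via the CLT. The only structural differences are that the paper first reduces to the case $n=1$ and then lifts to general $n$ by subadditivity, and that it introduces auxiliary Rademacher signs $\epsilon_j$ (so that resampling one $\epsilon_j$ produces an exactly two-point perturbation of size $2m^{-1/2}\bm{Y}_j$), whereas you resample the summand matrices directly; both lead to the same bound after the limit.
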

\begin{proof}
	Borrowing the idea from \cite{ABC+00} (see also \cite[Theorem 3.20]{BLM13}) to prove this theorem.
	We first assume $\sum_{i=1}^n \mathbb{E} \| \mathsf{D}_{\bm{X}_i}\bm{f}[\vec{\bm{X}}]\|_2^2 < \infty$; otherwise the inequality trivially holds.
	Then, it suffices to establish this theorem for $n=1$:
	\begin{align} \label{eq:Gaussian1}
	\textnormal{Var}\left(\bm{f}(\bm{X})\right) \leq  \mathbb{E} \left[ \big\| \mathsf{D}\bm{f}[\bm{X}]\big\|_2^2 \right],
	\end{align}
	and it can be easily extended to every $n\in\mathbb{N}$ by applying the matrix Efron-Stein inequality, Theorem \ref{theo:Efron}.
	
	Now, for every $j\in[m]\triangleq\{1,\ldots,m\}$, denote by $\bm{W}_j$, $\bm{W}_j'$ the $d_1\times d_1$ matrices whose entries are sampled from independent Rademacher random variables (i.e.~uniformly $\{\pm 1\}$-valued random variables). Let
	\[
	\bm{Y}_j = \frac{\ \left(\bm{W}_j+\mathrm{i}\cdot \bm{W}_j'\right) + \left(\bm{W}_j+\mathrm{i}\cdot \bm{W}_j'\right)^\dagger }{2}.
	\]
	Denote by $\epsilon_1,\ldots,\epsilon_m$ a series of independent Rademacher random variables, and define
	$\bm{S}_m \triangleq \frac{1}{\sqrt{m}} \sum_{j=1}^m \epsilon_j \bm{Y}_j $.
	Then, for every $j\in [m]$,
	\begin{align*}
	\text{Var}^{(j)} \left(\bm{f}(\bm{S}_m) \right)
	&= \frac14 \tr \mathbb{E}_{\bm{Y}_j} 
	\left[ \left( \bm{f}\left(\bm{S}_m + \frac{1-\epsilon_j}{\sqrt{m}} \bm{Y}_j \right) - \bm{f}\left(\bm{S}_m - \frac{1+\epsilon_j}{\sqrt{m}} \bm{Y}_j \right) \right)^2 \right].
	\end{align*}
	We invoke the matrix Efron-Stein inequality to obtain
	\begin{align} \label{eq:Gaussian2}
	\text{Var} \left(\bm{f}(\bm{S}_m) \right) 
	\leq \frac14 \sum_{j=1}^m \tr \mathbb{E} \left[ \left( \bm{f}\left(\bm{S}_m + \frac{1-\epsilon_j}{\sqrt{m}} \bm{Y}_j \right) - \bm{f}\left(\bm{S}_m - \frac{1+\epsilon_j}{\sqrt{m}} \bm{Y}_j \right) \right)^2 \right],
	\end{align}
	 and then use Taylor's expansion to further bound the right-hand side of Eq.~\eqref{eq:Gaussian2}. For every $i\in[n]$ and some constants $0\leq \alpha,\beta \leq 1$, it follows almost surely that 
	\begin{align*}
	\bm{f}\left(\bm{S}_m + \frac{1-\epsilon_j}{\sqrt{m}} \bm{Y}_j \right) &= \bm{f}(\bm{S}_m) + \mathsf{D}\bm{f}[\bm{S}_m]\left( \frac{1-\epsilon_j}{\sqrt{m}} \bm{Y}_j \right) + \mathcal{R}_2\left( \bm{S}_m, \frac{1-\epsilon_j}{\sqrt{m}} \bm{Y}_j \right);\\
	\bm{f}\left(\bm{S}_m - \frac{1+\epsilon_j}{\sqrt{m}} \bm{Y}_j \right) &= \bm{f}(\bm{S}_m) + \mathsf{D}\bm{f}[\bm{S}_m]\left( -\frac{1+\epsilon_j}{\sqrt{m}} \bm{Y}_j \right) + \mathcal{R}_2\left( \bm{S}_m, - \frac{1+\epsilon_j}{\sqrt{m}} \bm{Y}_j \right),
	\end{align*}
	where $\mathcal{R}_l:\mathbb{M}_{d_1} \times \mathbb{M}_{d_1} \rightarrow \mathbb{M}_{d_2}$ is the remainder term of the Taylor series:
	\begin{align*}
	\mathcal{R}_l(\bm{X},\bm{E}) \triangleq 
	\sum_{k=l}^\infty\frac{1}{k!}\mathsf{D}^{k}\bm{f}\left[\bm{X}\right](\underbrace{\bm{E},\ldots,\bm{E}}_k) 
	= o\left( |\bm{E}|^l \right).
	\end{align*}
	Therefore,
	\begin{align*}
	\bm{f}\left(\bm{S}_m + \frac{1-\epsilon_j}{\sqrt{m}} \bm{Y}_j \right) - 	\bm{f}\left(\bm{S}_m - \frac{1+\epsilon_j}{\sqrt{m}} \bm{Y}_j \right) 
	&\preceq \frac{2}{\sqrt{m}} \mathsf{D} \bm{f} [\bm{S}_m] \left( \bm{Y}_j \right) 
	+ o\left( \frac1m \right),
	\end{align*}
	and 
	\begin{align*}
	&\frac{1}{4} \sum_{j=1}^m\tr \mathbb{E} \left[ \left( \bm{f}\left(\bm{S}_m + \frac{1-\epsilon_j}{\sqrt{m}} \bm{Y}_j \right) - 	\bm{f}\left(\bm{S}_m - \frac{1+\epsilon_j}{\sqrt{m}} \bm{Y}_j \right) \right)^2 \right]
	\leq \big\| \mathsf{D}\bm{f}[\bm{S}_m] \big\|_2^2 
	+ o\left( \frac1{\sqrt{m}} \right).
	\end{align*}
	Let $m$ go to infinity, we have
	\begin{align} \label{eq:Gaussian3}
	\lim_{m\rightarrow \infty} \frac14 \sum_{i=1}^m \tr \mathbb{E} \left[ \left( \bm{f}\left(\bm{S}_m + \frac{1-\epsilon_j}{\sqrt{m}} \bm{Y}_j \right) - \bm{f}\left(\bm{S}_m - \frac{1+\epsilon_j}{\sqrt{m}} \bm{Y}_j \right) \right)^2 \right]
	\leq \mathbb{E} \left[  \big\| \mathsf{D}\bm{f}[\bm{X}] \big\|_2^2 \right],
	\end{align}
	where, by the central limit theorem (see Lemma \ref{lemm:GUE}), $\bm{S}_m$ converges in distribution to a random matrix $\bm{X}$ in GUE. Thus $\text{Var} \left(\bm{f}(\bm{S}_m) \right)$ converges to $\text{Var} \left(\bm{f}(\bm{X}) \right)$.
	
	Finally, the subadditivity of the variance and Eq.~\eqref{eq:Gaussian1} lead to
	\begin{align*}
	\text{Var}\left(\bm{f}\left(\vec{\bm{X}}\right)\right) 
	\leq \sum_{i=1}^n \mathbb{E} \left[ \text{Var}^{(i)} \big(\bm{f}\left(\vec{\bm{X}}\right)\big) \right]
	\leq \sum_{i=1}^n \mathbb{E} \Big[   \mathbb{E}_i \left[ \big\| \mathsf{D}_{\bm{X}_i}\bm{f}\left[\vec{\bm{X}}\right]\big\|_2^2 \right] \Big]
	= \sum_{i=1}^n \mathbb{E} \left[ \big\| \mathsf{D}_{\bm{X}_i}\bm{f}\left[\vec{\bm{X}}\right]\big\|_2^2 \right],
	\end{align*}
	which completes the proof.
\end{proof}

\subsection{Matrix $\mathrm{\Phi}$-Sobolev Inequalities} \label{Sobolev}


In this section, we consider matrix-valued functions defined on Boolean hypercubes: $\boldsymbol{f}:\{0,1\}\rightarrow \mathbb{M}_d^\text{sa}$ and establish matrix $\mathrm{\Phi}$-Sobolev inequalities.
The main ingredient to prove this inequality comes from Fourier analysis and the hypercontractive inequality for matrix-valued functions.

Ben-Aroya {et al.} \cite{BRW08} generalised Bonami and Beckner's results by considering matrix-valued functions $\boldsymbol{f}:\{0,1\}\rightarrow \mathbb{M}_d$.
Similarly, Fourier analysis can be naturally extended into the matrix setting; that is, the Fourier transform $\widehat{\boldsymbol{f}}$ of the matrix-valued function $\bm{f}$ can be expressed as
\begin{align*}
\begin{cases}
\widehat{\boldsymbol{f}}(S) = \frac{1}{2^n} \sum_{ x\in\{0,1\}^n } \boldsymbol{f}(x)u_S(x);\\
\boldsymbol{f}(x) = \sum_{S \subseteq \{1,\ldots,n\} } \widehat{\boldsymbol{f}}(S) u_S(x),
\end{cases}
\end{align*}
where $u_S(x) \triangleq  \Pi_{i\in S} (-1)^{x_i}$.
Therefore, the classical hypercontractive inequality \cite{Bon70,Bec75} can be extended to matrix-valued functions.

\begin{theo}
	[Matrix Bonami-Gross-Beckner Inequality \cite{BRW08}] \label{theo:matrix_hyper}
	For every $\boldsymbol{f}:\{0,1\}^n \rightarrow \mathbb{M}_d$ and $1\leq p \leq 2$,
	\[
	\left( \sum_{S\subseteq [n]} (p-1)^{|S|} \left\| \widehat{\boldsymbol{f}}(S)\right\|_{p}^2 \right)^{1/2} 
	\leq 
	\left( \frac{1}{2^n} \sum_{ x\in\{0,1\}^n } \left\| \boldsymbol{f}(x)\right\|_{p}^p \right)^{1/p}.	
	\]
\end{theo}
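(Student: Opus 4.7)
The plan is to follow the classical inductive strategy used for the scalar Bonami--Beckner inequality, carried out entirely in the noncommutative Schatten setting. Introduce the noise operator $T_\rho$ with $\rho=\sqrt{p-1}$, acting on matrix-valued functions by $\widehat{T_\rho \boldsymbol{f}}(S)=\rho^{|S|}\widehat{\boldsymbol{f}}(S)$, so that by Parseval the left-hand side equals $\bigl(\mathbb{E}_x \|T_\rho \boldsymbol{f}(x)\|_{*2}^{2}\bigr)^{1/2}$ in the Fourier sense. The target inequality then becomes the hypercontractive estimate $\|T_\rho \boldsymbol{f}\|_{L^2(\|\cdot\|_{*p})}\le \|\boldsymbol{f}\|_{L^p(\|\cdot\|_{*p})}$. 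I prove it by induction on $n$: first the two-point inequality $n=1$, then tensorization from $n-1$ to $n$.

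For the base case $n=1$, decompose $\boldsymbol{f}(0)=\boldsymbol{a}+\boldsymbol{b}$ and $\boldsymbol{f}(1)=\boldsymbol{a}-\boldsymbol{b}$ with $\boldsymbol{a}=\widehat{\boldsymbol{f}}(\emptyset)$, $\boldsymbol{b}=\widehat{\boldsymbol{f}}(\{1\})$. The inequality reduces to
\[
\bigl(\|\boldsymbol{a}\|_{*p}^{2}+(p-1)\|\boldsymbol{b}\|_{*p}^{2}\bigr)^{1/2}\;\le\;\Bigl(\tfrac{1}{2}\|\boldsymbol{a}+\boldsymbol{b}\|_{*p}^{p}+\tfrac{1}{2}\|\boldsymbol{a}-\boldsymbol{b}\|_{*p}^{p}\Bigr)^{1/p}.
\]
In the commutative case this can be verified by elementary single-variable calculus, but in the matrix setting $\boldsymbol{a}$ and $\boldsymbol{b}$ need not commute, so a direct calculus argument is unavailable. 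The cleanest route is to invoke the noncommutative two-point inequality of Ball, Carlen and Lieb, which is precisely the sharp $2$-uniform smoothness estimate for the Schatten $p$-class when $1\le p\le 2$. I expect this to be the principal obstacle; one has to import (or re-prove) the BCL two-point inequality, as the square-function bound rests on a genuinely noncommutative convexity argument.

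For the inductive step, write $x=(x_1,x')$ with $x'\in\{0,1\}^{n-1}$, and split $\boldsymbol{f}(x_1,x')=\boldsymbol{g}_{x_1}(x')$. The Fourier coefficients decompose as $\widehat{\boldsymbol{f}}(S)=\tfrac{1}{2}\bigl(\widehat{\boldsymbol{g}_0}(S)+\widehat{\boldsymbol{g}_1}(S)\bigr)$ when $1\notin S$ and $\tfrac{1}{2}\bigl(\widehat{\boldsymbol{g}_0}(S\setminus\{1\})-\widehat{\boldsymbol{g}_1}(S\setminus\{1\})\bigr)$ when $1\in S$. Group the sum on the left according to whether $1\in S$ and apply the two-point inequality in the $x_1$ variable to the matrices $\boldsymbol{A}(S')\triangleq \widehat{T_\rho^{(n-1)}\boldsymbol{g}_0}(S')+\widehat{T_\rho^{(n-1)}\boldsymbol{g}_1}(S')$ and the analogous difference. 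Next invoke the induction hypothesis on the two factors $\boldsymbol{g}_0,\boldsymbol{g}_1$. To combine these fiberwise estimates into a global one on $\{0,1\}^n$, use Minkowski's integral inequality for the normalized Schatten $p$-norm at $p\ge 1$: this allows the interchange of the outer $L^p$-norm (over $x'$) with the inner $L^2$-norm (over $x_1$), where the direction $p\le 2$ is what makes the inequality go the right way.

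In summary, the proof has two layers: a hard analytic ingredient (the noncommutative two-point inequality, where non-commutativity of $\boldsymbol{a},\boldsymbol{b}$ obstructs elementary methods) and a soft structural tensorization (Minkowski's inequality plus induction). Both $p=1$ and $p=2$ are limiting cases that can be checked directly and used to sanity-check constants; in particular $p=2$ gives an equality that Parseval's identity verifies immediately.
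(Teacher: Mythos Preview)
The paper does not give a proof of this theorem; it is quoted from \cite{BRW08} and used as a black box in the proof of Theorem~\ref{theo:Sobolev}. Your outline is in fact the Ben-Aroya--Regev--de~Wolf argument: the two-point inequality (which is precisely the Ball--Carlen--Lieb $2$-uniform smoothness estimate for Schatten classes with $1\le p\le 2$) furnishes the base case, and the inductive step is carried out directly on the Fourier-coefficient form via Minkowski's inequality. So at the level of strategy there is nothing to compare.

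Two technical points are worth flagging. First, your opening identification is incorrect: Parseval (Lemma~\ref{lemm:Parseval}) gives $\mathbb{E}_x\|T_\rho\boldsymbol{f}(x)\|_{*2}^2=\sum_S\rho^{2|S|}\|\widehat{\boldsymbol{f}}(S)\|_{*2}^2$, with the $*2$-norm on the Fourier side, whereas the left-hand side of the stated theorem carries the $*p$-norm $\|\widehat{\boldsymbol{f}}(S)\|_{*p}$. Thus the inequality you are asked to prove is \emph{not} the hypercontractive estimate $\|T_\rho\boldsymbol{f}\|_{L^2(\|\cdot\|_{*p})}\le\|\boldsymbol{f}\|_{L^p(\|\cdot\|_{*p})}$. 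Fortunately you never actually use that reformulation: your base case and your inductive step both operate directly on the Fourier-coefficient sum, which is the right thing to do.

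Second, in the inductive step the order of operations should be: (i) apply the two-point inequality for each $S'\subseteq[n-1]$ to bound the left-hand side by $\sum_{S'}(p-1)^{|S'|}\bigl(\tfrac12\|\widehat{\boldsymbol g}_0(S')\|_{*p}^p+\tfrac12\|\widehat{\boldsymbol g}_1(S')\|_{*p}^p\bigr)^{2/p}$; (ii) apply Minkowski (valid since $p\le2$) to exchange the outer $\ell^2_{S'}$ and inner $\ell^p_i$ norms; (iii) only then apply the induction hypothesis to each $\boldsymbol g_i$. You listed the induction hypothesis before Minkowski, but in that order there is nothing to which the hypothesis can be applied. With the corrected order the argument goes through exactly as in \cite{BRW08}.
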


With Theorem~\ref{theo:matrix_hyper}, we can prove a matrix $\mathrm{\Phi}$-Sobolev inequality for matrix-valued functions defined on symmetric Bernoulli random variables.
\begin{theo}[Matrix $\mathrm{\Phi}$-Sobolev Inequalities for Symmetric Bernoulli Random Variables] \label{theo:Sobolev}
	Let $X$ be uniformly distributed over $\mathcal{X}\equiv\{0,1\}^n$ (an $n$-dimensional binary hypercube) and $\boldsymbol{f}:\mathcal{X} \rightarrow \mathbb{M}_d^+$ be an arbitrary matrix-valued function. Then for all $p\in (1,2)$, and $\mathrm{\Phi}(u)=u^{2/p}$,
	\begin{equation} \label{eq:Sobolev}
	H_\mathrm{\Phi} (\boldsymbol{f}^p) \leq (2-p)\mathcal{E}(\boldsymbol{f})\cdot d^{1-2/p} + \tr \mathbb{E}[\boldsymbol{f}^2]\cdot(1-d^{1-2/p}).
	\end{equation}
\end{theo}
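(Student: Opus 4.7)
The plan is to combine matrix Fourier analysis on the Boolean hypercube with the matrix Bonami--Beckner inequality (Theorem~\ref{theo:matrix_hyper}) and a dimensional comparison between the normalised Schatten norms $\|\cdot\|_{*p}$ and $\|\cdot\|_{*2}$. The first move is to rewrite the entropy explicitly. Since $\mathrm{\Phi}(u)=u^{2/p}$, we have $\mathrm{\Phi}(\bm{f}^p)=\bm{f}^2$, so $H_\mathrm{\Phi}(\bm{f}^p)=\tr\mathbb{E}\bm{f}^2-\tr[(\mathbb{E}\bm{f}^p)^{2/p}]$. Because $2/p\geq 1$, the scalar map $u\mapsto u^{2/p}$ is convex on $[0,\infty)$, so Lemma~\ref{lemm:tr} gives $\tr[(\mathbb{E}\bm{f}^p)^{2/p}]\geq (\tr\mathbb{E}\bm{f}^p)^{2/p}$ and it suffices to bound $\tr\mathbb{E}\bm{f}^2-(\tr\mathbb{E}\bm{f}^p)^{2/p}$ by the right-hand side of \eqref{eq:Sobolev}.

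I then pass to the Fourier side. Expanding $\bm{f}=\sum_{S\subseteq[n]}\widehat{\bm{f}}(S)\chi_S$ and using orthonormality of the characters under the uniform measure, Parseval yields $\tr\mathbb{E}\bm{f}^2=\sum_S\|\widehat{\bm{f}}(S)\|_{*2}^2$. Averaging out a single coordinate annihilates exactly the characters $\chi_S$ with $i\in S$, so $\mathbb{E}_i\bm{f}=\sum_{S:\,i\notin S}\widehat{\bm{f}}(S)\chi_S$; combining this with the equivalent form $\mathcal{E}(\bm{f})=\sum_i\tr\mathbb{E}(\bm{f}-\mathbb{E}_i\bm{f})^2$ from Lemma~\ref{lemm:plus} identifies the Dirichlet form as $\mathcal{E}(\bm{f})=\sum_S|S|\,\|\widehat{\bm{f}}(S)\|_{*2}^2$.

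The core input is the matrix Bonami--Beckner inequality. Since $\bm{f}\succeq \bm{0}$ we have $\mathbb{E}\|\bm{f}\|_{*p}^p=\tr\mathbb{E}\bm{f}^p$, and Theorem~\ref{theo:matrix_hyper} becomes $\sum_S (p-1)^{|S|}\|\widehat{\bm{f}}(S)\|_{*p}^2\leq(\tr\mathbb{E}\bm{f}^p)^{2/p}$. To bring this back to the Schatten-$2$ quantities that define $\mathcal{E}(\bm{f})$, I would invoke the comparison $\|M\|_{*p}^2\geq d^{1-2/p}\|M\|_{*2}^2$ valid for $p\in[1,2]$; this is immediate from $\|M\|_{*p}=d^{-1/p}\|M\|_p$ together with the fact that unnormalised Schatten norms are non-increasing in $p$, i.e.\ $\|M\|_p\geq\|M\|_2$ for $p\leq 2$. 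Substitution yields $(\tr\mathbb{E}\bm{f}^p)^{2/p}\geq d^{1-2/p}\sum_S(p-1)^{|S|}\|\widehat{\bm{f}}(S)\|_{*2}^2$.

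The final step converts the spectral multipliers $1-(p-1)^{|S|}$ into the Dirichlet weights $|S|$ via the telescoping estimate $1-t^k=(1-t)\sum_{j=0}^{k-1}t^j\leq k(1-t)$ with $t=p-1\in(0,1)$, which combined with Parseval gives $\tr\mathbb{E}\bm{f}^2-\sum_S(p-1)^{|S|}\|\widehat{\bm{f}}(S)\|_{*2}^2\leq (2-p)\mathcal{E}(\bm{f})$. Chaining all the estimates produces $H_\mathrm{\Phi}(\bm{f}^p)\leq \tr\mathbb{E}\bm{f}^2-d^{1-2/p}\bigl[\tr\mathbb{E}\bm{f}^2-(2-p)\mathcal{E}(\bm{f})\bigr]$, which rearranges to \eqref{eq:Sobolev}. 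The only genuinely delicate step is the Schatten-norm comparison: matrix hypercontractivity naturally controls a Schatten-$p$ object while $\mathcal{E}(\bm{f})$ is intrinsically Schatten-$2$, and the dimensional penalty $d^{1-2/p}\leq 1$ incurred in bridging them is exactly what renders the inequality ``defective'' and what collapses to the classical $\mathrm{\Phi}$-Sobolev inequality at $d=1$.
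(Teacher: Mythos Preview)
Your proof is correct and follows essentially the same route as the paper: the convexity lemma for normalised trace (Lemma~\ref{lemm:tr}), the matrix Bonami--Beckner inequality (Theorem~\ref{theo:matrix_hyper}), the dimensional comparison $\|M\|_{*p}^2\geq d^{1-2/p}\|M\|_{*2}^2$, Parseval, the elementary bound $1-(p-1)^{|S|}\leq(2-p)|S|$, and the Fourier identification of $\mathcal{E}(\bm{f})$ are exactly the ingredients the paper uses, in the same logical order. The only cosmetic differences are that you organise the final algebra as $\tr\mathbb{E}\bm{f}^2-d^{1-2/p}[\tr\mathbb{E}\bm{f}^2-(2-p)\mathcal{E}(\bm{f})]$ rather than bounding the multiplier $1-(p-1)^{|S|}d^{1-2/p}$ directly, and you derive the Dirichlet-form identity via $\bm{f}-\mathbb{E}_i\bm{f}$ instead of the paper's auxiliary functions $\bm{g}_i$ (Lemma~\ref{lemm:Var}); both variations are equivalent.
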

\begin{proof}
	Starting from the left-hand side of Eq.~\eqref{eq:Sobolev}, the definition of the matrix $\Phi$-entropy functional gives
	\begin{align}
	H_\mathrm{\Phi} (\boldsymbol{f}^p) &= \tr \mathbb{E} \left[ \boldsymbol{f}^2 \right] - \tr \left[ \left( \mathbb{E} \boldsymbol{f}^p \right)^{2/p} \right]  
	\leq \tr \mathbb{E} \left[ \boldsymbol{f}^2 \right] - \left( \tr \mathbb{E} \boldsymbol{f}^p \right)^{2/p} \notag\\
	&= \tr \mathbb{E} \left[ \boldsymbol{f}^2 \right] - \left( \mathbb{E} \left\| \boldsymbol{f} \right\|_{p}^p \right)^{2/p} \cdot d^{-2/p}, \label{eq:Sobolev1}
	\end{align}
	where we apply the convexity of $\tr$, and recall that $(\;\cdot\;)^{2/p}$ is a convex function for $1\leq p \leq 2$.
	
	We then apply Theorem \ref{theo:matrix_hyper} to  Eq.~\eqref{eq:Sobolev1} to obtain
	\begin{align}
	H_\mathrm{\Phi}(\boldsymbol{f}^p)
	&\leq \tr \mathbb{E} \left[ \boldsymbol{f}^2 \right] - 	\left( \sum_{S\subseteq [n]} (p-1)^{|S|} \left\| \widehat{\boldsymbol{f}}(S)\right\|_{p}^2 \right) \cdot d^{-2/p}\notag \\
	&\leq \tr \mathbb{E} \left[ \boldsymbol{f}^2 \right] - 	\left( \sum_{S\subseteq [n]} (p-1)^{|S|} \tr \left[ \widehat{\boldsymbol{f}}(S)^2\right] \right)\cdot d^{1-2/p} \notag \\
	&= \tr \left[ \sum_{S\subseteq [n]} \widehat{\boldsymbol{f}}(S)^2 \right] - 	\left( \sum_{S\subseteq [n]} (p-1)^{|S|} \tr \left[ \widehat{\boldsymbol{f}}(S)^2\right] \right)\cdot d^{1-2/p} \notag \\
	&= \tr \left[ \sum_{S\subseteq [n]} \left(1-(p-1)^{|S|} d^{1-2/p} \right) \widehat{\boldsymbol{f}}(S)^2 \right], \label{eq:Sobolev3}
	\end{align}	
	where the second inequality is because the Schatten $p$-norm is non-increasing, and
	we apply Parseval's identity (Lemma \ref{lemm:Parseval} in Appendix \ref{app_lemmas}) in the third line.
	
	From the elementary analysis, it can be verified that, for all $S\subseteq [n]$ and $1\leq p\leq 2$,
$
	1-(p-1)^{|S|} \leq (2-p)|S|.
$
	Therefore, it follows that
	\[
	1-(p-1)^{|S|}d^{1-2/p} \leq (2-p)|S|d^{1-2/p} + (1-d^{1-2/p}).
	\]
	Finally, given $\sum_{ S\subseteq [n] } \tr \left[ |S| \widehat{\boldsymbol{f}}(S)^2 \right] = \mathcal{E}(\boldsymbol{f})$ (see Lemma \ref{lemm:Var} in Appendix \ref{app_lemmas}),  Eq.~\eqref{eq:Sobolev3} can be further deduced as
	\begin{align*}
	H_\mathrm{\Phi}(\boldsymbol{f}^p) 
	&\leq  \tr \left[ \sum_{S\subseteq [n]} \left(1-(p-1)^{|S|} d^{1-2/p} \right) \widehat{\boldsymbol{f}}(S)^2 \right] \\
	&\leq   \tr \left[ \sum_{S\subseteq [n]} \left( (2-p)|S|d^{1-2/p} + (1-d^{1-2/p})  \right) \widehat{\boldsymbol{f}}(S)^2 \right]\\
	&= (2-p)\mathcal{E}(\boldsymbol{f})\cdot d^{1-2/p} + \tr \mathbb{E}[\boldsymbol{f}^2]\cdot(1-d^{1-2/p}),
	\end{align*}
	which completes our claim.
\end{proof}

\begin{theo}[Matrix $\mathrm{\Phi}$-Sobolev Inequalities for Gaussian Distributions] \label{theo:Sobolev_Gau}
	Let $X=(X_1,\ldots,X_n)$ be a vector of $n$ independent standard Gaussian random variables taking values in $\mathcal{X}\equiv \mathbb{R}^n$, and 
	let $\boldsymbol{f}:\mathcal{X} \rightarrow \mathbb{M}_d^+$ be an arbitrary matrix-valued function. Then for all $p\in (1,2)$, and $\mathrm{\Phi}(u)=u^{2/p}$,
	\begin{equation} \label{eq:Sobolev_Gau}
	H_\mathrm{\Phi} (\boldsymbol{f}^p) \leq (2-p)\sum_{i=1}^n \mathbb{E} \left[ \big\| \mathsf{D}_{{X}_i}\bm{f}[{X}]\big\|_2^2 \right]\cdot d^{1-2/p} + \tr \mathbb{E}[\boldsymbol{f}^2]\cdot(1-d^{1-2/p}).
	\end{equation}
\end{theo}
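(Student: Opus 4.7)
The plan is to obtain the Gaussian $\mathrm{\Phi}$-Sobolev inequality from the Bernoulli version Theorem~\ref{theo:Sobolev} by a Central Limit Theorem approximation, in direct analogy with how the matrix Gaussian Poincar\'e inequality (Theorem~\ref{theo:Gaussian}) was obtained from the matrix Efron-Stein inequality.

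First I would, for each $i\in[n]$ and each positive integer $m$, introduce mutually independent Rademacher random variables $\{\epsilon_j^{(i)}\}_{j=1}^m$ and form $S_m^{(i)}\triangleq \tfrac{1}{\sqrt{m}}\sum_{j=1}^m \epsilon_j^{(i)}$. Writing $\mathbf{S}_m=(S_m^{(1)},\ldots,S_m^{(n)})$, I would define the matrix-valued function $\bm{g}_m:\{\pm 1\}^{mn}\to \mathbb{M}_d^+$ by $\bm{g}_m(\bm{\epsilon})\triangleq \bm{f}(\mathbf{S}_m)$. Since $\bm{g}_m$ is a function on an $mn$-dimensional Boolean hypercube, Theorem~\ref{theo:Sobolev} applies with the same $p$ and $\mathrm{\Phi}(u)=u^{2/p}$ and yields
\[
H_{\mathrm{\Phi}}(\bm{g}_m^p) \;\leq\; (2-p)\,\mathcal{E}(\bm{g}_m)\,d^{1-2/p} + \tr\mathbb{E}[\bm{g}_m^2]\,(1-d^{1-2/p}).
\]

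Next I would let $m\to\infty$ and identify both sides with their Gaussian counterparts. By the Central Limit Theorem, $\mathbf{S}_m$ converges in distribution to a vector $\mathbf{X}$ of $n$ independent standard Gaussians; under mild regularity/integrability assumptions on $\bm{f}$, the continuous mapping theorem and dominated convergence give $\tr\mathbb{E}[\bm{g}_m^2]\to \tr\mathbb{E}[\bm{f}(\mathbf{X})^2]$ and $H_{\mathrm{\Phi}}(\bm{g}_m^p)\to H_{\mathrm{\Phi}}(\bm{f}(\mathbf{X})^p)$. The dimensional constant $d^{1-2/p}$ does not depend on $m$, so the theorem would follow at once provided the Dirichlet-type term satisfies
\[
\lim_{m\to\infty}\mathcal{E}(\bm{g}_m) \;=\; \sum_{i=1}^n \mathbb{E}\!\left[\big\|\mathsf{D}_{\bm{X}_i}\bm{f}[\mathbf{X}]\big\|_2^2\right].
\]

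The main obstacle—and the key remaining estimate—is to establish this last limit. Flipping a single Rademacher $\epsilon_j^{(i)}$ perturbs $S_m^{(i)}$ by $\pm 2/\sqrt{m}$ while leaving every other $S_m^{(i')}$ unchanged, so a second-order Taylor expansion of $\bm{f}$ in the $i$-th coordinate direction gives, almost surely,
\[
\bm{g}_m(\bm{\epsilon})-\bm{g}_m(\widetilde{\bm{\epsilon}}^{(i,j)}) \;=\; \frac{\epsilon_j^{(i)}-(\epsilon_j^{(i)})'}{\sqrt{m}}\,\mathsf{D}_{\bm{X}_i}\bm{f}[\mathbf{S}_m](\mathbf{1}) \;+\; O(1/m).
\]
Squaring, taking the normalized trace, and using $\mathbb{E}[(\epsilon_j^{(i)}-(\epsilon_j^{(i)})')^2]=2$, each $(i,j)$-term of $\mathcal{E}(\bm{g}_m)$ contributes $(2/m)\tr\bigl|\mathsf{D}_{\bm{X}_i}\bm{f}[\mathbf{S}_m](\mathbf{1})\bigr|^2 + o(1/m)$. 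Summing over $j=1,\ldots,m$ and over $i$ and absorbing the factor $\tfrac12$ from the definition of $\mathcal{E}$ produces the desired Fr\'echet-derivative expression in the limit; the delicate point, as in the proof of Theorem~\ref{theo:Gaussian}, is to justify the interchange of limit and expectation by dominating the Taylor remainder uniformly in $m$, which follows from standard smoothness/growth hypotheses on $\bm{f}$. Combining the three limits yields Eq.~\eqref{eq:Sobolev_Gau}.
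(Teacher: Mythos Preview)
Your proposal is correct and follows essentially the same route as the paper: approximate each standard Gaussian $X_i$ by a normalized sum $S_m^{(i)}$ of Rademacher variables, apply the Bernoulli $\mathrm{\Phi}$-Sobolev inequality (Theorem~\ref{theo:Sobolev}) to $\bm{g}_m=\bm{f}(\mathbf{S}_m)$ on the $mn$-dimensional hypercube, and pass to the limit using the CLT together with the second-order Taylor expansion that identifies $\mathcal{E}(\bm{g}_m)\to \sum_i \mathbb{E}\big[\|\mathsf{D}_{X_i}\bm{f}[\mathbf{X}]\|_2^2\big]$, exactly as in the derivation of the Gaussian Poincar\'e inequality (Theorem~\ref{theo:Gaussian}) with $\bm{Y}_j\equiv 1$. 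The paper's own proof is only a brief sketch pointing back to Eq.~\eqref{eq:Gaussian3}; your write-up spells out the same mechanism in more detail.
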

\begin{proof}
	The proof parallels the approach in Theorem \ref{theo:Gaussian}.
	Recall from Eq.~\eqref{eq:Gaussian3} and let $\bm{Y}_i\equiv 1$:
	\[
	\mathcal{E}^{(i)}(\bm{f}) =
	\lim_{m\rightarrow \infty} \frac14 \sum_{i=1}^m \tr \mathbb{E}_i \left[ \left( \bm{f}\left({S}_m + \frac{1-\epsilon_i}{\sqrt{m}}  \right) - \bm{f}\left({S}_m - \frac{1+\epsilon_i}{\sqrt{m}}  \right) \right)^2 \right]
	= \mathbb{E}_i \left[  \big\| \mathsf{D}_{{X}_i} \bm{f}[{X}] \big\|_2^2 \right].
	\]
	This and Theorem~\eqref{theo:Sobolev} yield Eq.~\eqref{theo:Sobolev_Gau} and the statement follows.
\end{proof}

The logarithmic Sobolev inequality for matrix-valued functions immediately follows from Theorems \ref{theo:Sobolev} and \ref{theo:Sobolev_Gau}.
Subsequently, we denote by
\begin{align}
\Ent(\bm{Z}) := H_\Phi(\bm{Z}), \quad \text{when } \Phi(u) = u\log u.
\end{align}

\begin{coro}[Matrix Log-Sobolev Inequalities for Symmetric Bernoulli Random Variables]\label{coro:Log}
	Let $\boldsymbol{f}:\{0,1\}^n \rightarrow \mathbb{M}_d^+$ be an arbitrary matrix-valued function defined on the $n$-dimensional binary hypercube and assume that $X$ is uniformly distributed over $\{0,1\}^n$. Then
	\[
	\Ent (\bm{f}^2) \leq  2 \mathcal{E}(\bm{f}) + \log(d) \cdot \tr \mathbb{E}\left[\bm{f}^2\right].
	\]
\end{coro}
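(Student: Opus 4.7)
The plan is to derive the matrix logarithmic Sobolev inequality by taking the limit $p\to 2^-$ in the matrix $\mathrm{\Phi}$-Sobolev inequality of Theorem~\ref{theo:Sobolev}. Both sides of \eqref{eq:Sobolev} vanish at $p=2$ (the left side because $\mathrm{\Phi}(u)=u$ is then affine, and the right side because the prefactors $(2-p)$ and $(1-d^{1-2/p})$ both vanish), so dividing through by $2-p$ and invoking L'H\^opital's rule should extract $\Ent(\bm{f}^2)$ on the left and produce the stated inequality on the right.

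Concretely, I would rewrite \eqref{eq:Sobolev} as
\[
\frac{\tr\mathbb{E}[\bm{f}^2]-\tr[(\mathbb{E}\bm{f}^p)^{2/p}]}{2-p} \;\leq\; \mathcal{E}(\bm{f})\,d^{1-2/p} \;+\; \tr\mathbb{E}[\bm{f}^2]\cdot\frac{1-d^{1-2/p}}{2-p}.
\]
Using $\tfrac{d}{dp}d^{1-2/p}\big|_{p=2}=\tfrac12\log d$ and L'H\^opital on the last quotient, the right-hand side tends to $\mathcal{E}(\bm{f})+\tfrac12\log d\cdot\tr\mathbb{E}[\bm{f}^2]$. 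For the left-hand side, I would write $B(p)\triangleq\tr[(\mathbb{E}\bm{f}^p)^{2/p}]=F(p,2/p)$ with $F(\alpha,\beta)\triangleq\tr[(\mathbb{E}\bm{f}^\alpha)^\beta]$, and apply Proposition~\ref{Prop_trace_Petz} to evaluate the two partials at $(\alpha,\beta)=(2,1)$:
\[
\partial_\alpha F(2,1)=\tr\mathbb{E}[\bm{f}^2\log\bm{f}], \qquad \partial_\beta F(2,1)=\tr[\mathbb{E}\bm{f}^2\cdot\log\mathbb{E}\bm{f}^2].
\]
The chain rule then yields $B'(2)=\tr\mathbb{E}[\bm{f}^2\log\bm{f}]-\tfrac12\tr[\mathbb{E}\bm{f}^2\log\mathbb{E}\bm{f}^2]=\tfrac12\Ent(\bm{f}^2)$, after substituting $\log\bm{f}^2=2\log\bm{f}$. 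Applying L'H\^opital (with the sign from $\tfrac{d}{dp}(2-p)=-1$) gives $\tfrac12\Ent(\bm{f}^2)\leq \mathcal{E}(\bm{f})+\tfrac12\log d\cdot\tr\mathbb{E}[\bm{f}^2]$, and multiplying by $2$ yields the claim.

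The main obstacle will be the degenerate case in which $\mathbb{E}\bm{f}^p$ fails to be strictly positive definite, so that $\log\mathbb{E}\bm{f}^p$ is ill-defined and the trace derivative computation does not immediately apply. I plan to dispose of this by a standard regularisation: replace $\bm{f}$ by $\bm{f}+\varepsilon\bm{I}$, run the argument above for the regularised function (where the positive-definite case of Proposition~\ref{Prop_trace_Petz} is directly applicable and the chain-rule calculation is unambiguous), and then send $\varepsilon\to 0^+$ using dominated convergence, both sides being continuous in $\varepsilon$. A minor subsidiary point is the justification of the interchange of limit and differentiation in L'H\^opital's step, which follows from smoothness of $p\mapsto B(p)$ on $(1,2]$ after regularisation.
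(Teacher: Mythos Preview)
Your proposal is correct and follows essentially the same approach as the paper: divide \eqref{eq:Sobolev} by $2-p$ and pass to the limit $p\to 2^-$ via L'H\^opital. The only difference is cosmetic: the paper packages the left-hand limit computation into Lemma~\ref{lemm:Var_Ent} (which proves the stronger matrix-valued identity via the Bogoliubov--Kubo--Mori operator), whereas you compute the trace derivative directly by the chain rule---note, however, that Proposition~\ref{Prop_trace_Petz} is not quite the right citation, since the derivatives $\partial_\alpha\tr[\bm{A}^\alpha]$ you need are with respect to a scalar exponent rather than a matrix direction; they follow instead from elementary spectral calculus on positive-definite matrices.
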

\begin{proof}
	By letting $p\rightarrow 2$, the left-hand side of Eq.~\eqref{eq:Sobolev} becomes
	\[
	\lim_{p\rightarrow 2^-} \frac{H_\Phi (\bm{f}^p)}{2-p} 
	= \lim_{p\rightarrow 2^-}  \frac{ \tr\Big[ \mathbb{E}\left[ \bm{f}(X) ^2\right] - \left( \mathbb{E}\left[\bm{f}(X)^p\right]^{2/p} \right) \Big] }{2-p} = \frac{\Ent(\bm{f}^2)}{2},
	\]
	where the last identity follows from Lemma \ref{lemm:Var_Ent}.
	Similarly, the right-hand side gives
	\[
	\lim_{p\rightarrow 2^-} \frac{ (2-p)\mathcal{E}(\boldsymbol{f})\cdot d^{1-2/p} + \tr \mathbb{E}\left[\boldsymbol{f}^2\right]\cdot(1-d^{1-2/p}) }{2-p} = \mathcal{E}(\bm{f})+ \frac{\log(d)}{2} \cdot \tr \mathbb{E}\left[\bm{f}^2\right]
	\]
	as established.
\end{proof}

\begin{coro}[Matrix Gaussian Logarithmic Sobolev Inequalities]\label{coro:Log_Gau}
	Assume that $X$ is a vector of independent and identical standard Gaussian random variables on $\mathbb{R}^n$ and let $\boldsymbol{f}:\mathbb{R}^n \rightarrow \mathbb{M}_d^+$ be an arbitrary matrix-valued function of $X$. Then,
	\[
	\Ent (\bm{f}^2) \leq  2 \sum_{i=1}^n \mathbb{E} \left[ \big\| \mathsf{D}_{{X}_i}\bm{f}[{X}]\big\|_2^2 \right] + \log(d) \cdot \tr \mathbb{E}\left[\bm{f}^2\right].
	\]
\end{coro}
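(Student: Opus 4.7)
The proof will mirror that of Corollary \ref{coro:Log}, simply replacing the appeal to Theorem \ref{theo:Sobolev} by Theorem \ref{theo:Sobolev_Gau}. Concretely, the plan is to start from the matrix $\mathrm{\Phi}$-Sobolev inequality for Gaussian distributions with $\mathrm{\Phi}(u)=u^{2/p}$, divide both sides by $(2-p)$, and then take the limit $p\to 2^-$. The left-hand side will converge to $\tfrac{1}{2}\,\Ent(\bm{f}^2)$ by Lemma \ref{lemm:Var_Ent} (the same lemma used in the Bernoulli case), while the two terms on the right-hand side have clean limits.

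For the right-hand side, the first term $(2-p)\sum_{i=1}^n \mathbb{E}\bigl[\|\mathsf{D}_{\bm{X}_i}\bm{f}[\mathbf{X}]\|_2^2\bigr]\cdot d^{1-2/p}$ divided by $(2-p)$ simply tends to $\sum_{i=1}^n \mathbb{E}\bigl[\|\mathsf{D}_{\bm{X}_i}\bm{f}[\mathbf{X}]\|_2^2\bigr]$ since $d^{1-2/p}\to 1$. For the second term, I need the elementary one-variable limit
\[
\lim_{p\to 2^-}\frac{1-d^{1-2/p}}{2-p} = \frac{\log d}{2},
\]
which is immediate by L'H\^opital's rule (or differentiating $p\mapsto d^{1-2/p}$ at $p=2$). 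Hence the second term contributes $\tfrac{\log d}{2}\cdot\tr\mathbb{E}[\bm{f}^2]$ in the limit. Multiplying the resulting inequality through by $2$ yields exactly the claimed Gaussian logarithmic Sobolev inequality.

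There is no real obstacle here, since Theorem \ref{theo:Sobolev_Gau} has already done all of the heavy lifting (the matrix Bonami--Beckner inequality and the central-limit argument from the proof of Theorem \ref{theo:Gaussian}); the corollary is simply a tight-from-defective reduction via the $p\to 2$ limit that turns a power-type $\Phi$-entropy into the relative entropy. The only thing worth being careful about is interchanging limit with expectation and trace on both sides, which is justified under the integrability assumptions already implicit in the definition of the matrix $\mathrm{\Phi}$-entropies (\(\mathbb{E}\|\bm{f}^2\|_\infty<\infty\) and \(\mathbb{E}\|\Phi(\bm{f}^p)\|_\infty<\infty\) in a neighborhood of $p=2$), so monotone/dominated convergence applies termwise as in the proof of Corollary \ref{coro:Log}.
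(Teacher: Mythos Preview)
Your proposal is correct and takes essentially the same approach as the paper: the paper does not give a separate proof for this corollary but simply indicates it follows from Theorem \ref{theo:Sobolev_Gau} in exactly the same way Corollary \ref{coro:Log} follows from Theorem \ref{theo:Sobolev}, i.e., by dividing by $2-p$ and letting $p\to 2^-$ using Lemma \ref{lemm:Var_Ent} and the elementary limit $\lim_{p\to 2^-}(1-d^{1-2/p})/(2-p)=\tfrac{1}{2}\log d$.
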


\begin{remark}\label{remark:LSI}
	Denoted by LS($C,D$) (see e.g.~\cite[Section 5.1]{BGL13}) the set of  log-Sobolev inequalities with constants $C>0$, $D\geq 0$:
	\begin{align*} 
	\Ent(f^2)\leq 2C \mathcal{E}(f) + D \mathbb{E}[f^2]. 
	\end{align*}
	When $D=0$, the log-Sobolev inequality is called \emph{tight}; otherwise, it is called \emph{defective}.
	It is well known that the best constants of the classical log-Sobolev inequalities for symmetric Bernoulli random variables and standard Gaussian random variables are $(C,D)=(1,0)$ \cite{Bon70, Gro75}. However, numerical simulation shows that examples ($d>1$) exist for matrix-valued functions so that:
	$\Ent (\bm{f}^2) >  2 \mathcal{E}(\bm{f})$.
	In Corollary \ref{coro:Log}, we establish the log-Sobolev inequality with constant $(C,D)=(1,\log d)$. 

	We also emphasize that such \emph{defectiveness} in the quantum case has been proved by a recent paper \cite{BR18}.
	Moreover, whether the established constant $(C,D)=(1,\log d)$ is optimal for matrix-valued functions with Bernoulli random variables is still open.
	
	\Endremark
\end{remark}

\section{Entropic Inequality for Classical-Quantum Ensembles} \label{sec:FI}
In this section, we connect the matrix $\Phi$-entropies with quantum information theory and present a functional inequality for the classical-quantum (c-q) ensembles that undergo a special Markov evolution.

We follow the notation in Refs.~\cite{Rag13,Rag14}.
Let $\mathcal{X}$ be a sample space. We denote by $\mathscr{P}(\mathcal{X})$ the set of all probability distributions on $\mathcal{X}$ and by $\mathscr{P}_*(\mathcal{X})$ the subset of $\mathscr{P}(\mathcal{X})$ which consists of all strictly positive distributions. The set of all $d\times d$ matrix-valued functions on $\mathcal{X}$ is denoted by $\mathscr{F}(\mathcal{X})$; $\mathscr{F}_*(\mathcal{X})$ and $\mathscr{F}_*^0(\mathcal{X})$ are the subsets of $\mathscr{F}(\mathcal{X})$ consisting of all strictly positive and non-negative functions, respectively.

Any \emph{classical discrete channel} or \emph{Markov kernel} with input alphabet $\mathnormal{\mathcal{X}}$ and output alphabet $\mathcal{Y}$ can be described by a transition probabilities $\{K(y|x): x \in\mathcal{X},\, y\in\mathcal{Y} \}$.
For any probability distribution $\mu$ defined on the alphabet $\mathcal{X}$, we denote the channel acting on $\mu$ from the right and acting on matrix-valued functions $\bm{f}\in\mathscr{F}(\mathcal{Y})$ respectively by 
\begin{align} \label{eq:kernel}
\mu K(y) \triangleq \sum_{x\in\mathcal{X}} \mu(x) K(y|x), \quad y\in\mathcal{Y}, \quad \text{and}\quad
K \bm{f}(x) \triangleq \sum_{y\in\mathcal{Y}} K(y|x) \bm{f}(y), \quad x\in\mathcal{X}.
\end{align}
The set of all classical channels is denoted by $\mathscr{M}(\mathcal{Y}|\mathcal{X})$.
If $\mu\otimes K \in \mathscr{P}(\mathcal{X}) \times \mathscr{M}(\mathcal{Y}|\mathcal{X})$ denotes the distribution of a random pair $(X,Y) \in \mathcal{X}\times \mathcal{Y}$ with $P_X = \mu$ and $P_{Y|X}=K$, then 
\begin{align} \label{eq:K_exp}
K \bm{f}(x) = \mathbb{E}[ \bm{f}(Y)| X=x ] 
\end{align}
for any $\bm{f}\in\mathscr{F}(\mathcal{Y})$ and $x\in\mathcal{X}$.
We say that a pair $(\mu,K)\in  \mathscr{P}(\mathcal{X}) \times \mathscr{M}(\mathcal{Y}|\mathcal{X})$ is \emph{admissible} if $\mu \in\mathscr{P}_*(\mathcal{X})$ and $\mu K \in \mathscr{P}_*(\mathcal{Y})$.
Hence the \emph{backward} or \emph{adjoint} channel $K^* \in \mathscr{M}(\mathcal{X}|\mathcal{Y})$ can be defined by
\begin{align} \label{eq:adjoint}
K^*(x|y) = \frac{ K(y|x) \mu(x) }{\mu K(y)}, \quad (x,y)\in\mathcal{X}\times \mathcal{Y}.
\end{align}
If $(X,Y) \sim \mu \times K$, it follows that $K^* = P_{X|Y}$ and
\begin{align} \label{eq:K_exp_adjoint}
K^* \bm{f}(y) = \mathbb{E}[ \bm{f}(X)| Y=y ]
\end{align}
for any $\bm{f}\in\mathscr{F}(\mathcal{X})$ and $y\in\mathcal{Y}$.

Define the conditional matrix $\Phi$-entropy of $\bm{Z}$ given ${Y}$ which takes values in any Polish space can be defined by
\begin{align} \label{eq:cond_mEnt}
H_\Phi(\bm{Z}|{Y}) \triangleq \tr \mathbb{E}\left[ \Phi (\bm{Z}) |{Y} \right] - \tr\left[ \Phi\left( \mathbb{E} \left[ \bm{Z} |{Y} \right] \right)\right].
\end{align}
Combining the definition of matrix $\Phi$-entropies with \eqref{eq:cond_mEnt} immediately gives the following law of total variance:
\begin{align} \label{eq:law_total}
\begin{split}
H_\Phi(\bm{Z}) 
&= \mathbb{E}_{Y } \left[ H_\Phi(\bm{Z}|Y ) \right] + H_\Phi\left( \mathbb{E} \left[ \bm{Z} | {Y} \right] \right).
\end{split}
\end{align}

Fix $\Phi(u)=u \log u $ and assume that the distribution $\mu \in \mathscr{P}(\mathcal{X})$ is defined on a discrete space $\mathcal{X}$. If we consider a random matrix $\bm{\rho}_X$ to be an ensemble of classical-quantum (c-q) states $ (\mu,\bm{\nu}) \triangleq \{(\mu(x), \bm{\rho}_x)\}_{x\in\mathcal{X}}$,
where each $\bm{\rho}_x \succeq 0$ and $\Tr \bm{\rho}_x=1$, then its $\Phi$-entropy is related to the \emph{Holevo quantity} of $\{(\mu(x), \bm{\rho}_x)\}_{x\in\mathcal{X}}$:
\begin{align*}
d \cdot H_{u\log u}(\bm{\rho}_X) 
&\equiv \sum_{x\in\mathcal{X}} \mu(x) \Tr \left[ \bm{\rho}_x \log \bm{\rho}_x \right] - \Tr \left[ \bar{\bm{\rho}} \log \bar{\bm{\rho}} \right]
= \sum_{x\in\mathcal{X}} \mu(x) \cdot S\left( \bm{\rho}_x \| \bar{\bm{\rho}} \right)
=: \chi(\mu,\bm{\nu}),
\end{align*}
where $\bar{\bm{\rho}}=\mathbb{E}_\mu[\bm{\rho}_X]=\sum_{x\in\mathcal{X}} \mu(x)\bm{\rho}_x$ and $S(\bm{\rho}\|\bm{\sigma})\triangleq \Tr \bm{\rho} (\log \bm{\rho} - \log \bm{\sigma})$ is the quantum relative entropy.

{Denote by $\bm{\rho}_Y\equiv\{\mu'(y),\bm{\rho}'_y\}_{y\in\mathcal{Y}}$  the resulting random matrix of $\bm{\rho}_X$ that undergoes a Markov evolution $K$}
by the rule:
\begin{align}
\begin{split} \label{eq:K}
\{\mu(x)\}_{x\in\mathcal{X}} &\mapsto \{\mu K (y)\}_{y\in\mathcal{Y}} = \left\{ \sum_{x\in\mathcal{X}} \mu(x) K(y|x) \right\}_{y\in\mathcal{Y}}
=: \{\mu'(y)\}_{y\in\mathcal{Y}} ;\\
\{\bm{\rho}_x\}_{x\in\mathcal{X}} &\mapsto \{ K^* \bm{\nu} (y) \}_{y\in\mathcal{Y}} = \left\{ \sum_{x\in\mathcal{X}}  K^*(x|y) \bm{\rho}_x \right\}_{y\in\mathcal{Y}}
=: \{\bm{\rho}'_y\}_{y\in\mathcal{Y}}.
\end{split}
\end{align}
Note that 
each $\bm{\rho}'_y$ can be interpreted as the conditional expectation $\mathbb{E}_{K^*}[\bm{\rho}_X| Y=y]$, which is a post-selection state with the probability law $\{K^*(x|y)\}_{x\in\mathcal{X}}$. We also have the following relationship between the $\Phi$-entropy of $\bm{\rho}_Y$ and the Holevo quantity of $(\mu',\bm{\nu}') \triangleq \{(\mu' (y),\bm{\rho}'_y \}_{y\in\mathcal{Y}}$:
\begin{align}
d\cdot H_{u\log u}(\bm{\rho}_Y) 
&= \sum_{y\in\mathcal{Y}} \mu'(y) \Tr \left[ \bm{\rho}'_y \log \bm{\rho}'_y \right] - \Tr \left[ \overline{\bm{\rho}'} \log \overline{\bm{\rho}'} \right]
=: \chi(\mu',\bm{\nu}'),
\end{align}
where $\overline{\bm{\rho}'}  =\mathbb{E}_{\mu'}[\bm{\rho}_Y]=\sum_{y\in\mathcal{Y}}\mu'(y)\bm{\rho}'_y$.

Now for any $\mu \in \mathscr{P}(\mathcal{X})$ and $K\in\mathscr{M}(Y|X)$, we define the constant:
\begin{align} \label{eq:eta}
\eta_\Phi(\mu,K) \triangleq 
\sup_{\bm{\nu}: \chi(\mu,\bm{\nu}) \neq 0  }
\frac{\chi(\mu',\bm{\nu}')}{\chi(\mu,\bm{\nu})}.
\end{align}

By Jensen's inequality, it can be shown that $ 0\leq \eta_\Phi(\mu,K) \leq 1$ (see Lemma \ref{lemm:eta}). 
Therefore, we relate $\eta_\Phi(\mu,K)$ to the following functional inequality of the matrix $\Phi$-entropies.

\begin{prop}[Functional Form for C-Q Ensembles] \label{prop:functional_SDPI_cqstate}
	Fix an admissible pair $(\mu,K)$ and let $(X,Y)$ be a random pair with probability law $\mu \otimes K$. Then $\eta_\Phi(\mu,K) \le c$ if and only if the inequality
	\begin{align}\label{eq:entropy_production_inequality_cqstate}
	H_\Phi[\bm{f}(X)] \le \frac{1}{1-c} \mathbb{E}\left( H_\Phi[\bm{f}(X)|Y]\right)
	\end{align}
	holds for all non-constant classical-quantum states $\bm{f}:\mathcal{X}\to \mathcal{Q}(\mathbb{C}^d)$, where we denote by $\mathcal{Q}(\mathbb{C}^d)$ the set of density operators on $\mathbb{C}^d$. 
	In particular, Eq.~\eqref{eq:entropy_production_inequality_cqstate} can be expressed in terms of Holevo quantities:
	\begin{align}
	\chi(\mu, \bm{\nu}) &\le 
	\frac{1}{1-c} \mathbb{E}_{Y} \left[ \chi(K^* , \bm{\nu}|Y)  \right]
	\end{align}	
	where the expectation $\mathbb{E}_Y$ is taken with respect to $\{\mu K(y) \}_{y\in\mathcal{Y}}$.
	
	Moreover,
	\begin{align}\label{eq:functional_SDPI_cqstate}
	\eta_\Phi(\mu,K) 
	&= 1 - \inf\left\{ \frac{\mathbb{E}\left[ H_\Phi(\bm{f}(X)|Y)\right]}{H_\Phi(\bm{f}(X))} : \bm{f} \neq {\rm const}\right\}.
	\end{align}
\end{prop}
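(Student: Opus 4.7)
The plan is to deduce both claims directly from the law of total matrix $\Phi$-entropy \eqref{eq:law_total}. Applied to $\bm{Z} = \bm{f}(X)$, it states
\[
H_\Phi[\bm{f}(X)] = \mathbb{E}\!\left[ H_\Phi[\bm{f}(X)|Y] \right] + H_\Phi\!\left( \mathbb{E}[\bm{f}(X)|Y] \right).
\]
The first step is to identify the inner conditional expectation: since the backward channel $K^{*}$ plays the role of $P_{X|Y}$ by \eqref{eq:adjoint} and \eqref{eq:K_exp_adjoint}, we have $\mathbb{E}[\bm{f}(X)|Y=y] = K^{*}\bm{\nu}(y) = \bm{\rho}'_y$, so the last summand equals $H_\Phi(\bm{\rho}_Y)$. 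Exploiting the normalization $d\cdot H_\Phi(\bm{\rho}_X) = \chi(\mu,\bm{\nu})$ and $d\cdot H_\Phi(\bm{\rho}_Y) = \chi(\mu',\bm{\nu}')$ already recorded in the excerpt, the decomposition becomes
\[
\chi(\mu,\bm{\nu}) = d\cdot\mathbb{E}\!\left[ H_\Phi[\bm{f}(X)|Y] \right] + \chi(\mu',\bm{\nu}').
\]

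Both conclusions then reduce to algebra. For the characterization \eqref{eq:functional_SDPI_cqstate}, dividing the displayed identity by $\chi(\mu,\bm{\nu}) > 0$ produces
\[
\frac{\chi(\mu',\bm{\nu}')}{\chi(\mu,\bm{\nu})} = 1 - \frac{\mathbb{E}\!\left[ H_\Phi[\bm{f}(X)|Y]\right]}{H_\Phi[\bm{f}(X)]},
\]
and taking the supremum over non-constant $\bm{f}$ (equivalently, over $\bm{\nu}\neq\mu\bm{I}$) converts the left-hand side into $\eta_\Phi(\mu,K)$ and the right-hand side into one minus the claimed infimum. For the equivalence between $\eta_\Phi(\mu,K)\leq c$ and \eqref{eq:entropy_production_inequality_cqstate}, I would substitute the decomposition into $\chi(\mu',\bm{\nu}')\leq c\,\chi(\mu,\bm{\nu})$ and simplify, obtaining $(1-c)\, H_\Phi[\bm{f}(X)] \leq \mathbb{E}[H_\Phi[\bm{f}(X)|Y]]$. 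The Holevo-quantity reformulation then comes by multiplying through by $d$ and noting that, conditional on $Y=y$, the post-selected c-q ensemble is $\{K^{*}(x|y),\bm{\rho}_x\}_{x\in\mathcal{X}}$, whose Holevo quantity is precisely $\chi(K^{*},\bm{\nu}|Y=y)$.

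No serious obstacle is expected, because the argument is essentially a rewriting driven by the law of total matrix $\Phi$-entropy together with the identification $d\cdot H_\Phi = \chi$ on c-q ensembles. The only point requiring care is to verify that "non-constant $\bm{f}$" in \eqref{eq:entropy_production_inequality_cqstate} and the exclusion "$\bm{\nu}\neq\mu\bm{I}$" in \eqref{eq:eta} single out the same class, so that the division by $H_\Phi[\bm{f}(X)]$ is legitimate and the two suprema range over the same set. This follows from the strict convexity of $\Phi(u)=u\log u$: $H_\Phi[\bm{f}(X)]=0$ can occur only when $\bm{f}(X)$ is $\mu$-almost surely constant, which is exactly the excluded case. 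All remaining estimates are direct and parallel the classical functional-inequality proof of strong data processing inequalities in \cite{Rag13, Rag14}.
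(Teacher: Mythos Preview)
Your proposal is correct and follows essentially the same route as the paper: both arguments hinge on the law of total matrix $\Phi$-entropy \eqref{eq:law_total} together with the identification $\mathbb{E}[\bm{f}(X)|Y] = K^{*}\bm{f}(Y)$ from \eqref{eq:K_exp_adjoint}, after which everything is reversible algebra. Your treatment is in fact slightly more complete, since you explicitly derive the variational formula \eqref{eq:functional_SDPI_cqstate} and justify why ``non-constant $\bm{f}$'' and ``$\bm{\nu}\neq\mu\bm{I}$'' single out the same class, points the paper's proof leaves implicit.
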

\begin{proof}
	The inequality $\chi(\mu' , \bm{\nu}') \le c \cdot \chi(\mu, \bm{\nu})$ is equivalent to 
	\begin{align} \label{eq:Prop1}
	\begin{split}
	H_\Phi( K^* \bm{f}(Y) ) &\leq cH_\Phi ( \bm{f}(X)) \\
	&= c\left( \mathbb{E} \left[ H_\Phi( \bm{f}(X) |Y ) \right] + H_\Phi\left( \mathbb{E} \left[ \bm{f}(X)|Y \right] \right) \right)\\
	&= c\left( \mathbb{E} \left[ H_\Phi( \bm{f}(X) |Y ) \right] + H_\Phi\left( K^* \bm{f}(Y) \right) \right),
	\end{split}
	\end{align}
	where we use the identity of the law of total variance, Eq.~\eqref{eq:law_total}, and the property of the backward channel, Eq.~\eqref{eq:K_exp_adjoint}, from which we obtain
	\begin{align} 
	H_\Phi( K^* \bm{f}(Y) ) \leq \frac{c}{1-c} \mathbb{E} \left[ H_\Phi( \bm{f}(X) |Y ) \right],
	\end{align}
	and hence,
	\begin{align}
	H_\Phi[\bm{f}(X)]  &= \mathbb{E} \left[ H_\Phi( \bm{f}(X) |Y ) \right] + H_\Phi\left( K^* \bm{f}(Y) \right) 
	\leq \frac{1}{1-c} \mathbb{E}\left( H_\Phi[\bm{f}(X)|Y]\right).
	\end{align}
\end{proof}

Raginsky showed, in recent work \cite{Rag13,Rag14},  that if $f=\mathrm{d} \nu / \mathrm{d} \mu$ (i.e.~a Radon-Nikodym derivative), then
$K^ * f = \frac{ \mathrm{d} (\nu K) }{ \mathrm{d} (\mu K) }$.
Moreover, the constant $\eta_\Phi(\mu,K)$ in Eq.~\eqref{eq:eta} corresponds to the (classical) \emph{strong data processing inequality} (SDPI):
\begin{align} \label{eq:eta2}
\eta_\Phi(\mu,K) \triangleq \sup_{{\nu}\neq \mu} \frac{D(\nu K , \mu K )}{D(\nu,\mu)},
\end{align}
where $D(\nu,\mu)$ is the classical Kullback-Leibler divergence of $\mu$ and $\nu$. 
We remark that Proposition~\ref{prop:functional_SDPI_cqstate} can be viewed as a generalization of  Raginshky's SDPI result to case of quantum ensembles undergoing the Markov evolution $K$ described in Eq.~\eqref{eq:K}.

\section{Conclusions} \label{sec:conclusion}

In this paper, we extend the work of Gross \cite{Gro75} to study the $\Phi$-Sobolev and Poincar\'e inequalities on matrix-valued functions. The uncertainty measure adopted in this work is the matrix $\Phi$-entropy \cite{CT14}, which interpolates between the variance and conventional entropies of random matrices. The Dirichlet energy is defined as the average of local influences on every bit position. We note, that if the considered domain is a Boolean hypercube (i.e.~$\mathcal{X}=\{0,1\}^n$), then the our notion coincides with the Dirichlet form induced by the Markov jumping process \cite{CHT15}.
Our results generalise the classical functional inequalities to the matrix  Poincar\'e inequality for separable convex functions and Gaussian unitary ensembles. Moreover, we establish the matrix $\Phi$-Sobolev inequalities for symmetric Bernoulli and standard Gaussian distributions. Unlike its classical counterpart \cite{LO00}, the derived inequality is defective, which shows the matrix version is more involved, but it reduces to the classical case when $d=1$.

The random matrix framework has immediate applications in quantum information theory and mathematics.
For example, by relating the matrix $\Phi$-entropy to the celebrated Holevo quantity, we obtain a formula for the strong data processing inequality. In a follow-up work \cite{CHT15}, we are able to upper bound the convergence rate of the dynamical evolutions of a quantum ensemble by employing the matrix Efron-Stein inequality and a modified log-Sobolev inequality.
Furthermore, the studied model naturally occurs in the probabilistic context of matrix-valued stochastic processes \cite{Dys62b,CFM+11}. Solving the matrix functional inequalities of the Markovian matrix-valued processes provides a way to characterize their long-term behaviours. 


\appendix

\section{Miscellaneous Lemmas}\label{app_lemmas}

\begin{lemm}
	\label{lemm:plus}
	Let $\bm{X}$ be a random matrix taking values in $\mathbb{M}^\text{sa}$, and let $\bm{Y}$ be independently and identically distributed as $\bm{X}$. 
	Then for each natural number $q\geq 1$, 
	\begin{align} \label{eq:plus1}
	\mathbb{E}\left[ \left| \bm{X} - \mathbb{E}\bm{X} \right|^q \right]
	= \mathbb{E}\left[ \left( \bm{X} - \mathbb{E}\bm{X} \right)_+^q \right]
	+ \mathbb{E}\left[ \left( \bm{X} - \mathbb{E}\bm{X} \right)_-^q \right]
	\end{align}
	and
	\begin{align} \label{eq:plus2}
	\frac12 \, \mathbb{E} \left[ \left| \bm{X}-\bm{Y} \right|^q\right]
	= \mathbb{E} \left[ \left( \bm{X}-\bm{Y} \right)_+^q\right]
	= \mathbb{E} \left[ \left( \bm{X}-\bm{Y} \right)_-^q\right]
	\end{align}
	In particular,
	\begin{align} \label{eq:plus2b}
	\mathbb{E}\left[ \left( \bm{X} - \mathbb{E}\bm{X} \right)^2 \right]
	= \frac12 \, \mathbb{E} \left[ \left( \bm{X}-\bm{Y} \right)^2\right].
	\end{align}
\end{lemm}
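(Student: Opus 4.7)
The plan is to reduce everything to the orthogonal decomposition $\bm{A} = \bm{A}_+ - \bm{A}_-$ of a self-adjoint matrix together with a symmetrization argument exploiting the i.i.d.\ assumption.

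For Eq.~\eqref{eq:plus1}, I would first recall that for any $\bm{A} \in \mathbb{M}^{\text{sa}}$ the positive and negative parts $\bm{A}_+, \bm{A}_- \succeq \bm{0}$ are supported on orthogonal spectral subspaces, so $\bm{A}_+ \bm{A}_- = \bm{A}_- \bm{A}_+ = \bm{0}$ and $|\bm{A}| = \bm{A}_+ + \bm{A}_-$. Expanding $|\bm{A}|^q = (\bm{A}_+ + \bm{A}_-)^q$ by the (non-commutative) binomial expansion, every cross term contains a factor $\bm{A}_+ \bm{A}_-$ or $\bm{A}_- \bm{A}_+$ and therefore vanishes, leaving the pointwise identity $|\bm{A}|^q = \bm{A}_+^q + \bm{A}_-^q$. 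Applying this to $\bm{A} = \bm{X} - \mathbb{E}\bm{X}$ and taking expectations yields Eq.~\eqref{eq:plus1}.

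For Eq.~\eqref{eq:plus2}, I would apply the same pointwise identity to $\bm{A} = \bm{X} - \bm{Y}$ to get $\mathbb{E}[|\bm{X}-\bm{Y}|^q] = \mathbb{E}[(\bm{X}-\bm{Y})_+^q] + \mathbb{E}[(\bm{X}-\bm{Y})_-^q]$. The key observation is then an exchangeability argument: since $\bm{X}$ and $\bm{Y}$ are i.i.d., the pair $(\bm{X},\bm{Y})$ has the same law as $(\bm{Y},\bm{X})$, so $\bm{X}-\bm{Y}$ has the same law as $-(\bm{X}-\bm{Y})$. Because the map $\bm{A} \mapsto -\bm{A}$ swaps positive and negative parts, we get $(\bm{X}-\bm{Y})_+ \stackrel{d}{=} (\bm{X}-\bm{Y})_-$, and hence their $q$-th power expectations coincide. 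Combining these two facts immediately gives the two equalities in Eq.~\eqref{eq:plus2}.

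Finally, Eq.~\eqref{eq:plus2b} is the $q=2$ specialization, but it can also be checked directly by expanding: $\mathbb{E}[(\bm{X}-\bm{Y})^2] = 2\mathbb{E}[\bm{X}^2] - 2(\mathbb{E}\bm{X})^2 = 2\mathbb{E}[(\bm{X}-\mathbb{E}\bm{X})^2]$, using $\mathbb{E}[\bm{X}\bm{Y}] = (\mathbb{E}\bm{X})^2$ by independence. There is no real obstacle to any of these steps; the only subtle point worth spelling out carefully is the vanishing of the non-commutative cross terms in the binomial expansion of $(\bm{A}_+ + \bm{A}_-)^q$, which is the crucial ingredient that makes $|\bm{A}|^q = \bm{A}_+^q + \bm{A}_-^q$ an \emph{identity} rather than an inequality, and thus lets Eq.~\eqref{eq:plus1} hold with equality.
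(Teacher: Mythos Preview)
Your proposal is correct and follows essentially the same route as the paper: both use the orthogonality $\bm{A}_+\bm{A}_-=\bm{0}$ to obtain the pointwise identity $|\bm{A}|^q=\bm{A}_+^q+\bm{A}_-^q$, then invoke the exchangeability of $(\bm{X},\bm{Y})$ to equate $\mathbb{E}[(\bm{X}-\bm{Y})_+^q]$ with $\mathbb{E}[(\bm{X}-\bm{Y})_-^q]$, and finish Eq.~\eqref{eq:plus2b} by direct expansion using independence. Your write-up is in fact slightly more explicit than the paper's about why the cross terms vanish and about the distributional symmetry argument.
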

\begin{proof}
	For each realisation $X$ of $\bm{X}$ in $\mathbb{M}^\text{sa}$, $X=X_+-X_-$ for some $X_+, X_-\succeq 0$ and ${X}_+  {X}_- = \bm{0}.$ Slightly abusing the notation, we hence use $\bm{X}_+$ and $\bm{X}_-$ to denote the positive and negative decomposition of their realisations of $\bm{X}$.

	Therefore, for each natural number $q\geq 1$, 
	\begin{align*}
	\mathbb{E}\left[ \left| \bm{X} - \mathbb{E}\bm{X} \right|^q \right]
	&= \mathbb{E} \left[ \left( \left( \bm{X} - \mathbb{E}\bm{X} \right)_+ + \left( \bm{X} - \mathbb{E}\bm{X} \right)_- \right)^q \right] 
	= \mathbb{E}\left[ \left( \bm{X} - \mathbb{E}\bm{X} \right)_+^q \right]
	+ \mathbb{E}\left[ \left( \bm{X} - \mathbb{E}\bm{X} \right)_-^q \right].	
	\end{align*}
	Likewise, we have
	\begin{align*}
	\frac12 \,  \mathbb{E} \left[ \left| \bm{X}-\bm{Y} \right|^q\right]
	&= \frac12 \, \mathbb{E} \left[ \left( \left( \bm{X}-\bm{Y} \right)_+ + \left( \bm{Y}-\bm{X} \right)_+ \right)^q\right] 
	=  \mathbb{E} \left[ \left( \bm{X}-\bm{Y} \right)_+^q\right].
	\end{align*}
	The last line follows since $\bm{Y}$ is an identical copy of $\bm{X}$.
	
	Following the same reasoning, we have $|\bm{X}|=\bm{X}_- + (-\bm{X})_-$, and thus $	\frac12 \mathbb{E} \left[ \left| \bm{X}-\bm{Y} \right|^q\right]
	= \mathbb{E} \left[ \left( \bm{X}-\bm{Y} \right)_-^q\right]$.
	Finally, Eq.~(\ref{eq:plus2b}) follows from elementary calculations:
	\begin{align*}
	\frac12 \mathbb{E} \left[ \left( \bm{X}-\bm{Y} \right)^2\right]
	&= \frac12 \mathbb{E} \left[ \bm{X}^2-\bm{X}\bm{Y}-\bm{Y}\bm{X} + \bm{Y}^2 \right]
	= 	\mathbb{E}\left[ \left( \bm{X} - \mathbb{E}\bm{X} \right)^2 \right].
	\end{align*}
\end{proof}

\begin{lemm}
	[Central Limit Theorem of Gaussian Unitary Ensembles] \label{lemm:GUE}
	Let $\{\epsilon_j\}_j$ be a series of Rademacher variables, and let $\{\bm{W}_j\}_j$, $\{\bm{W}_j ' \}_j$ be $d\times d$ matrices whose entries are sampled independently from the Rademacher variables.
	Let
	\[
	\bm{Y}_j =  \frac{\ \left(\bm{W}_j+\mathrm{i}\cdot \bm{W}_j'\right) + \left(\bm{W}_j+\mathrm{i}\cdot \bm{W}_j'\right)^\dagger }{2},
	\]
	and
	$\bm{S}_m \triangleq \frac{1}{\sqrt{m}} \sum_{j=1}^m \epsilon_j \bm{Y}_j, $
	where $\{\epsilon_j\}_j$ are Rademacher variables again. 
	If $m$ tends to infinity, then $\bm{S}_m$ converges in distribution to a $d\times d$ matrix in the Gaussian unitary ensemble.
\end{lemm}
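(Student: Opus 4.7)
The plan is to reduce this to the classical multivariate central limit theorem applied to the vectorization of $\bm{S}_m$. First I would observe that each $\bm{Y}_j$ is Hermitian with diagonal entries $(Y_j)_{kk} = W_{j,kk}$ and off-diagonal entries $(Y_j)_{kl} = \tfrac{W_{j,kl}+W_{j,lk}}{2} + \mathrm{i}\tfrac{W'_{j,kl}-W'_{j,lk}}{2}$ for $k\neq l$. Collecting the real scalar degrees of freedom of $\bm{S}_m$ into a vector $\vec{S}_m \in \mathbb{R}^{d^2}$ (the $d$ diagonal entries together with the real and imaginary parts of the $\binom{d}{2}$ upper-triangular entries), I can write $\vec{S}_m = \tfrac{1}{\sqrt{m}}\sum_{j=1}^m \vec{V}_j$, where the $\vec{V}_j$ are i.i.d.\ copies of a bounded, mean-zero random vector $\vec{V}_1$ obtained from $\epsilon_1 \bm{Y}_1$. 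The classical multivariate CLT then gives $\vec{S}_m \Rightarrow \mathcal{N}(\bm{0}, \bm{\Sigma})$ with $\bm{\Sigma} = \mathrm{Cov}(\vec{V}_1)$.

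The next step is to compute $\bm{\Sigma}$ entry by entry and verify it matches the covariance of the GUE. For diagonals, $(\epsilon_j Y_j)_{kk} = \epsilon_j W_{j,kk}$ is $\pm 1$-valued with unit variance, and distinct diagonal entries are uncorrelated since $\mathbb{E}[\epsilon_j^2 W_{j,kk} W_{j,k'k'}] = \delta_{kk'}$. For the real part of an off-diagonal entry ($k<l$), the summand $\epsilon_j(W_{j,kl}+W_{j,lk})/2$ has variance $\tfrac12$; likewise the imaginary part $\epsilon_j(W'_{j,kl}-W'_{j,lk})/2$ has variance $\tfrac12$. Cross-covariances between real and imaginary parts vanish because $\{W_{j,\cdot}\}$ and $\{W'_{j,\cdot}\}$ are independent; cross-covariances between distinct off-diagonal positions vanish because their defining Rademacher atoms are disjoint; and in every case the common factor $\epsilon_j$ contributes only $\mathbb{E}[\epsilon_j^2]=1$. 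This covariance structure agrees exactly with that of the standard GUE on $\mathbb{M}_d^\text{sa}$.

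The argument is essentially bookkeeping; no genuine obstacle arises since the vector summands $\vec{V}_j$ are i.i.d.\ and uniformly bounded, so every moment condition for the multivariate Lindeberg–L\'evy CLT holds trivially, and the covariance identification is a direct calculation. The only subtlety worth flagging is that within a single $j$ the coordinates of $\vec{V}_j$ are not jointly mutually independent (they share the factor $\epsilon_j$), but this is harmless: the multivariate CLT requires only i.i.d.\ vector summands, not independent coordinates within each summand. Finally, convergence in distribution of $\vec{S}_m$ to a Gaussian vector is equivalent, through the linear bijection between $\mathbb{R}^{d^2}$ and $\mathbb{M}_d^\text{sa}$, to convergence in distribution of $\bm{S}_m$ to the corresponding GUE matrix, which is the claim.
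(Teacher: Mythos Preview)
Your proposal is correct and is, in fact, cleaner than the paper's own argument. Both proofs rest on the central limit theorem, but they organize the passage to the limit differently. The paper first invokes the scalar CLT entry by entry to obtain marginal Gaussianity of each coordinate of $\bm{S}_m$, and then separately argues that cross-correlations between distinct entries vanish in the limit (using the strong law of large numbers on $\frac{1}{m}\sum_j Y_j^{(kl)}Y_j^{(k'l')}$). You instead vectorize $\bm{S}_m$ into $\mathbb{R}^{d^2}$ and apply the multivariate CLT once to the i.i.d.\ bounded summands $\vec{V}_j$, obtaining joint Gaussianity and the full covariance structure in one stroke. Your route is more direct and avoids a gap that the paper's sketch leaves implicit: marginal Gaussianity together with pairwise vanishing correlations does not by itself imply joint convergence to a product Gaussian; one really needs joint convergence, which the multivariate CLT delivers automatically. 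Your remark that the coordinates of each $\vec{V}_j$ need not be mutually independent (they share the factor $\epsilon_j$) is exactly the right observation, and your covariance bookkeeping matches the GUE normalization the paper uses (diagonal variance $1$, off-diagonal real and imaginary parts each of variance $\tfrac12$).
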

\begin{proof}
	It is clear from the central limit theorem that the diagonal entries converge to a standard real Gaussian variable, while the upper-triangular entries converge to a complex Gaussian variables with zero mean and unit variance.
	Next, we show that the correlation between any (non-identical) entry vanishes as $m$ goes to infinity.
	That is, for every $(k,l)\neq (k',l')$
	\begin{align*}
	\mathbb{E}_{\epsilon_1,\ldots,\epsilon_m} \left[ S_{m}^{(kl)} S_{m}^{(k'l')} \right]
	= \frac1m \sum_{j=1}^m  Y_{j}^{(kl)} Y_j^{(k'l')},
	\end{align*}
	from which we apply the strong law of large numbers to obtain
	\begin{align*}
	\lim_{m\to \infty} \frac1m \sum_{j=1}^m  Y_{j}^{(kl)} Y_j^{(k'l')} 
	= \mathbb{E} \left[ Y \cdot Y' \right] 
	= \mathbb{E} \left[ Y \right] \cdot \mathbb{E} \left[ Y' \right]
	=0 \quad \text{almost surely,}
	\end{align*}
	where we denote by $Y$ (resp.~$Y'$) the random variable that the sequences $\{Y_j^{(kl)}\}_j$ (resp.~$\{Y_j^{(k'l')}\}_j$) are sampled from. It is easy to see that $Y$ and $Y'$ are independent zero-mean random variables. 
	Therefore, the entries are mutually independent and $\lim_{m\rightarrow \infty} \bm{S}_m $ belongs to the Gaussian unitary ensemble.
\end{proof}

\begin{lemm}
	[Parseval's Identity for Matrix-Valued Functions] \label{lemm:Parseval}
	For every matrix-valued function $\boldsymbol{f}:\{0,1\}^n \rightarrow \mathbb{M}_d$,
	we have the following identity
	\[
	\mathbb{E} \left[ \boldsymbol{f} (X)^2 \right] = \sum_{ S\subseteq [n] } \widehat{\boldsymbol{f}} (S)^2,
	\]
	where the expectation is taken uniformly over all $X\in\{0,1\}^n$.
\end{lemm}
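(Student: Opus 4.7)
The plan is to establish the identity by a direct Fourier expansion of $\bm{f}(X)^2$ and then invoke the orthogonality of the characters $\{\chi_S\}_{S\subseteq [n]}$. Since the characters $\chi_S(x)\in\{\pm 1\}$ are scalars, they commute with the matrix coefficients $\widehat{\bm{f}}(T)\in\mathbb{M}_d$, so I can freely pull them past matrix factors without worrying about the non-commutativity of the Fourier coefficients themselves.

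First, I would substitute the inverse Fourier formula
\[
\bm{f}(x)=\sum_{S\subseteq[n]}\widehat{\bm{f}}(S)\,\chi_S(x)
\]
into $\bm{f}(x)^2$ and expand to obtain a double sum
\[
\bm{f}(x)^2=\sum_{S,T\subseteq[n]}\widehat{\bm{f}}(S)\,\widehat{\bm{f}}(T)\,\chi_S(x)\,\chi_T(x).
\]
Next, I would use the standard multiplicative identity $\chi_S(x)\chi_T(x)=\chi_{S\triangle T}(x)$, where $S\triangle T$ is the symmetric difference, and take expectations with respect to the uniform distribution on $\{0,1\}^n$. The key orthogonality relation is $\mathbb{E}[\chi_R(X)]=\mathbbm{1}\{R=\emptyset\}$, which collapses the double sum to its diagonal $S=T$:
\[
\mathbb{E}\bigl[\bm{f}(X)^2\bigr]=\sum_{S,T}\widehat{\bm{f}}(S)\,\widehat{\bm{f}}(T)\,\mathbbm{1}\{S=T\}=\sum_{S\subseteq[n]}\widehat{\bm{f}}(S)^2.
\]

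There is essentially no serious obstacle, since the argument is a verbatim transcription of the scalar Parseval identity once one observes that the characters are scalar-valued. The only point that merits explicit mention is to preserve the left/right ordering $\widehat{\bm{f}}(S)\widehat{\bm{f}}(T)$ throughout the derivation (these matrices need not commute), but this ordering is irrelevant on the surviving diagonal $S=T$, where the product is simply $\widehat{\bm{f}}(S)^2$. Thus the proof reduces to two displayed lines together with the orthogonality of characters.
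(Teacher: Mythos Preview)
Your proof is correct and takes essentially the same approach as the paper: both rely on the Fourier expansion of $\bm{f}$ and the orthogonality of the characters $\chi_S$. The only cosmetic difference is that the paper expands just one factor of $\bm{f}(X)$ and identifies $\mathbb{E}[\bm{f}(X)\chi_S(X)]=\widehat{\bm{f}}(S)$ directly, whereas you expand both factors and collapse the double sum via $\chi_S\chi_T=\chi_{S\triangle T}$; your version is arguably tidier about the left/right ordering of the matrix coefficients.
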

\begin{proof}
	With the Fourier expansion of the matrix-valued function $\boldsymbol{f}$, it follows that
	\begin{align*}
	\mathbb{E} \left[ \boldsymbol{f} (X)^2 \right]
	= \mathbb{E} \left[ \boldsymbol{f} (X) \cdot \left( \sum_{ S\subseteq [n] } \widehat{\boldsymbol{f}}(S) \chi_S(X) \right) \right] 
	= \sum_{ S\subseteq [n] } \widehat{\boldsymbol{f}}(S) \cdot \mathbb{E} \left[ \boldsymbol{f}(X)\chi_S (X) \right] 
	= \sum_{ S\subseteq [n] } \widehat{\boldsymbol{f}} (S)^2.
	\end{align*}
\end{proof}

\begin{lemm}
	\label{lemm:Var}
	With the prevailing assumptions, and every $\boldsymbol{f}:\{0,1\}^n \rightarrow \mathbb{M}_d^\text{sa}$, we have
	\[
	\sum_{ S\subseteq [n] } \tr \left[ |S| \widehat{\boldsymbol{f}}(S)^2 \right] = \mathcal{E}(\boldsymbol{f}).
	\]
\end{lemm}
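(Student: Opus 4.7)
The plan is to expand everything via the Fourier expansion on the Boolean cube and then rely on orthogonality of the characters $\chi_S$. Writing $\bm{f}(x)=\sum_{S\subseteq[n]}\widehat{\bm{f}}(S)\chi_S(x)$, I first observe that when we flip the $i$-th bit (i.e.\ replace $X_i$ by an independent copy $X_i'$), only characters $\chi_S$ with $i\in S$ are affected. Concretely, using the multiplicative form $\chi_S(\omega)=\prod_{j\in S}\omega_j$ with $\omega_j=(-1)^{X_j}\in\{\pm1\}$,
\[
\chi_S(X)-\chi_S(\widetilde{X}^{(i)}) \;=\; \mathbbm{1}[i\in S]\,\chi_{S\setminus\{i\}}(X)\,(\omega_i-\omega_i'),
\]
so that
\[
\bm{f}(X)-\bm{f}(\widetilde{X}^{(i)}) \;=\; \sum_{S\ni i} \widehat{\bm{f}}(S)\,\chi_{S\setminus\{i\}}(X)\,(\omega_i-\omega_i').
\]

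Next I would square this expression and take expectations. Expanding,
\[
\bigl(\bm{f}(X)-\bm{f}(\widetilde{X}^{(i)})\bigr)^{2}
=\sum_{S,T\ni i}\widehat{\bm{f}}(S)\widehat{\bm{f}}(T)\,\chi_{S\setminus\{i\}}(X)\chi_{T\setminus\{i\}}(X)\,(\omega_i-\omega_i')^{2}.
\]
Since $\omega_i$ and $\omega_i'$ are independent $\pm1$'s, $\mathbb{E}[(\omega_i-\omega_i')^{2}]=2$. By orthogonality of the characters,
\[
\mathbb{E}\!\left[\chi_{S\setminus\{i\}}(X)\chi_{T\setminus\{i\}}(X)\right]
=\mathbbm{1}\bigl[S\setminus\{i\}=T\setminus\{i\}\bigr],
\]
which, given $i\in S\cap T$, forces $S=T$. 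Hence
\[
\mathbb{E}\!\left[\bigl(\bm{f}(X)-\bm{f}(\widetilde{X}^{(i)})\bigr)^{2}\right]
=2\sum_{S\ni i}\widehat{\bm{f}}(S)^{2}.
\]

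Finally, I would sum over $i$, interchange the two sums, and apply the normalised trace. Since $\sum_{i=1}^{n}\mathbbm{1}[i\in S]=|S|$, it follows that
\[
\sum_{i=1}^{n}\mathbb{E}\!\left[\bigl(\bm{f}(X)-\bm{f}(\widetilde{X}^{(i)})\bigr)^{2}\right]
= 2\sum_{S\subseteq[n]}|S|\,\widehat{\bm{f}}(S)^{2},
\]
and taking $\tfrac{1}{2}\tr$ of both sides yields $\mathcal{E}(\bm{f})=\sum_{S\subseteq[n]}|S|\,\tr[\widehat{\bm{f}}(S)^{2}]$, which is the claim.

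No real obstacle is anticipated: the matrix-valuedness does not interfere because the Fourier coefficients $\widehat{\bm{f}}(S)$ are \emph{constant} matrices pulled outside the expectation, while the characters $\chi_S(X)\in\{\pm1\}$ are scalars that commute with everything. The one subtle bookkeeping point is that in the double sum $\widehat{\bm{f}}(S)\widehat{\bm{f}}(T)$ need not equal $\widehat{\bm{f}}(T)\widehat{\bm{f}}(S)$, but the orthogonality relation collapses the sum to the diagonal $S=T$, avoiding any commutativity issue; alternatively one could invoke the cyclicity of $\tr$ from the start. This same mechanism is what was implicitly used in Parseval's identity (Lemma~\ref{lemm:Parseval}), so the proof is essentially a weighted variant of that computation.
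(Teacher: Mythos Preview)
Your proof is correct and follows essentially the same Fourier-analytic route as the paper: both expand $\bm{f}$ in characters, observe that only $\chi_S$ with $i\in S$ survive the $i$-th perturbation, and use orthogonality to collapse the cross terms. The only cosmetic difference is that the paper first passes from the independent copy $\widetilde{X}^{(i)}$ to the deterministic bit-flip $\bar{X}^{(i)}$ (picking up a factor $\tfrac12$ from the event $X_i\neq X_i'$), introduces the auxiliary function $\bm{g}_i(x)=\tfrac12\bigl(\bm{f}(x)-\bm{f}(\bar{x}^{(i)})\bigr)$, and then invokes Parseval's identity on $\bm{g}_i$, whereas you work directly with $\widetilde{X}^{(i)}$ and expand the square by hand.
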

\begin{proof}
	For every $n$-tuple $x\triangleq (x_1,\ldots,x_n)\in\{0,1\}^n$, denote $\overline{x}^{(i)}\triangleq (x_1,\ldots, x_{i-1}, 1-x_i,x_{i+1},\ldots,x_n )$.
	For every $i\in[n]$, introduce the matrix-valued function
	$\boldsymbol{g}_i(x)= \frac{ \boldsymbol{f}(x)-\boldsymbol{f}(\overline{x}^{(i)})}2$.
	Then, for every $S\subseteq [n]$, it can be observed that
	\begin{align*}
	&\widehat{\boldsymbol{g}}_i(S) = \mathbb{E} \left[ \boldsymbol{g}_i(X) \chi_S(X) \right]
	= \frac12 \mathbb{E} \left[ \left( \boldsymbol{f}(X)-\boldsymbol{f}(\overline{X}^{(i)}) \right) \cdot (-1)^{\sum_{j \in S} X_j} \right]
	= 
	\begin{cases}
	0 \quad  &\text{if } i \notin S\\
	\widehat{\boldsymbol{f}}(S) \quad &\text{if } i\in S.
	\end{cases}
	\end{align*}
	Apply Parseval's identity, Lemma \ref{lemm:Parseval} to obtain
	\[
	\mathbb{E}\left[ \boldsymbol{g}_i(X)^2 \right] = \sum_{S\subseteq [n]} \widehat{\boldsymbol{g}}_i (S)^2
	= \sum_{S:i\in S} \widehat{\boldsymbol{f}}(S)^2.
	\]
	Finally, since $X$ is uniformly distributed, $\mathcal{E}(\boldsymbol{f})$ can be rewritten as
	\begin{align*}
	\mathcal{E}(\boldsymbol{f}) 
	&= \frac12 \tr\mathbb{E}\left[\sum_{i=1}^n \left( \boldsymbol{f}({X})-\boldsymbol{f}\left(\widetilde{{X}}^{(i)}\right)\right)^2\right]
	=\frac14 \tr\mathbb{E}\left[\sum_{i=1}^n \left( \boldsymbol{f}({X})-\boldsymbol{f}\left(\overline{{X}}^{(i)}\right)\right)^2\right]\\
	&= \sum_{i=1}^n \tr \mathbb{E}\left[ \boldsymbol{g}_i(X)^2 \right]
	= \sum_{i=1}^n \sum_{S:i\in S} \tr \left[ \widehat{\boldsymbol{f}}(S)^2 \right] 
	=\sum_{ S\subseteq [n] } \tr \left[ |S| \widehat{\boldsymbol{f}}(S)^2 \right]. 
	\end{align*}
	This completes the proof.
\end{proof}

\begin{lemm} \label{lemm:Var_Ent}
	Let $\bm{Z}$ be a random matrix taking values in $\mathbb{M}^+$ such that $\|\bm{Z}\|_\infty<\infty$. For $p\in [1,2)$, we define the matrix-valued $p$-variance of $\bm{Z}$ by
	$\textnormal{\textbf{Var}}_p[\bm{Z}] \triangleq \mathbb{E}\big[\bm{Z}^2\big] - \Big( \mathbb{E}\big[\bm{Z}^p\big] \Big)^{2/p}$.
	It follows that
	\[
	\lim_{p\to 2^-} \frac{\textnormal{\textbf{Var}}_p[\bm{Z}]}{2-p}
	= \frac12 \mathbb{E}\Big[ \bm{Z}^2\log \bm{Z}^2 \Big] - \frac12 \mathbb{E}\big[\bm{Z}^2\big] \cdot \log \mathbb{E}\big[ \bm{Z}^2 \big].
	\]
\end{lemm}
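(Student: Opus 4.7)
My plan is to apply L'Hôpital's rule: both numerator and denominator vanish at $p=2$, so the limit equals $-\frac{d}{dp}\textnormal{\textbf{Var}}_p[\bm{Z}]\big|_{p=2}$, which in turn equals $F'(2)$ for $F(p) \triangleq (\mathbb{E}[\bm{Z}^p])^{2/p}$. The whole exercise thus reduces to computing one matrix derivative at $p=2$. (Dominated convergence justifies differentiating under the expectation, using $\|\bm{Z}\|_\infty < \infty$ to bound $\bm{Z}^p \log \bm{Z}$ locally uniformly in $p$.)

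To compute $F'(2)$, I would write $F(p) = \exp\bigl(G(p)\bigr)$ with $G(p) \triangleq \tfrac{2}{p}\log \mathbb{E}[\bm{Z}^p]$. The product and chain rules for the Fr\'echet derivative (Proposition~\ref{prop:properties}), together with $\frac{d}{dp}\mathbb{E}[\bm{Z}^p] = \mathbb{E}[\bm{Z}^p \log \bm{Z}]$, give
\[
G'(2) = -\tfrac{1}{2}\log \bm{A} + \mathsf{D}\log[\bm{A}](\bm{B}),
\]
where I abbreviate $\bm{A} \triangleq \mathbb{E}[\bm{Z}^2]$ and $\bm{B} \triangleq \mathbb{E}[\bm{Z}^2 \log \bm{Z}]$. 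Now the key simplification: $G(2) = \log \bm{A}$, so $\exp(G(2)) = \bm{A}$ and the chain rule yields $F'(2) = \mathsf{D}\exp[\log \bm{A}](G'(2))$.

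Then I would exploit two clean facts at this special point. First, $\mathsf{D}\exp[\log \bm{A}](\log\bm{A}) = \bm{A}\log\bm{A}$, because $\log\bm{A}$ trivially commutes with itself and the G\^ateaux derivative of $\exp$ along a commuting direction reduces to the scalar formula. Second, $\mathsf{D}\exp[\log\bm{A}] \circ \mathsf{D}\log[\bm{A}] = \mathsf{D}(\exp\circ\log)[\bm{A}] = \mathsf{D}(\mathrm{id})[\bm{A}]$, which is the identity map, so $\mathsf{D}\exp[\log\bm{A}]\circ\mathsf{D}\log[\bm{A}](\bm{B}) = \bm{B}$. Combining these two identities gives
\[
F'(2) = -\tfrac{1}{2}\bm{A}\log\bm{A} + \bm{B} = -\tfrac{1}{2}\mathbb{E}[\bm{Z}^2]\log\mathbb{E}[\bm{Z}^2] + \mathbb{E}[\bm{Z}^2\log\bm{Z}].
\]
Finally, I would use $\log\bm{Z}^2 = 2\log\bm{Z}$ for $\bm{Z}\in\mathbb{M}^+$ to rewrite $\mathbb{E}[\bm{Z}^2\log\bm{Z}] = \tfrac{1}{2}\mathbb{E}[\bm{Z}^2\log\bm{Z}^2]$, yielding exactly the claimed identity.

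The main delicate point is the non-commutativity: generically $\log\bm{A}$ and $\mathsf{D}\log[\bm{A}](\bm{B})$ do not commute, so $\mathsf{D}\exp$ cannot be computed by the naive scalar rule on each summand of $G'(2)$. The trick that avoids this obstruction is to resist expanding $\mathsf{D}\exp[\log\bm{A}]\circ \mathsf{D}\log[\bm{A}]$ and instead recognise it as the Fr\'echet derivative of the composite $\exp\circ\log = \mathrm{id}$; this is precisely why the evaluation point $p=2$ (where the outer exponent is $1$) makes the computation tractable without any Daleckii--Krein integral representation of $\mathsf{D}\exp$.
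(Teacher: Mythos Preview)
Your argument is correct and is in fact cleaner than the paper's. Both proofs start identically with L'H\^opital, reducing to the computation of $F'(2)$ for $F(p)=(\mathbb{E}[\bm{Z}^p])^{2/p}$. The paper then takes logarithms and differentiates, obtaining the Bogoliubov--Kubo--Mori operator $\mathsf{T}_{\bm{A}}(\cdot)=\int_0^\infty (s\bm{I}+\bm{A})^{-1}(\cdot)(s\bm{I}+\bm{A})^{-1}\,\mathrm{d}s$ as the Fr\'echet derivative of $\log$; it then applies the explicit inverse $\mathsf{T}_{\bm{A}}^{-1}(\cdot)=\int_0^1 \bm{A}^s(\cdot)\bm{A}^{1-s}\,\mathrm{d}s$ to solve for $\frac{d}{dp}\bm{A}^{2/p}$, and finally evaluates the resulting double integral at $p=2$, where it collapses to $\mathsf{T}_{\bm{A}}^{-1}\circ\mathsf{T}_{\bm{A}}(\bm{B})=\bm{B}$.

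Your route bypasses both integral representations entirely: rather than invoking the BKM operator and its inverse, you recognise directly via the chain rule that $\mathsf{D}\exp[\log\bm{A}]\circ\mathsf{D}\log[\bm{A}]=\mathsf{D}(\exp\circ\log)[\bm{A}]=\mathrm{id}$, and handle the remaining term $\mathsf{D}\exp[\log\bm{A}](\log\bm{A})$ by the commuting-direction observation. This is exactly the same cancellation the paper obtains, but seen at the operator level rather than through explicit integral formulas; your proof is shorter and requires no knowledge of the BKM operator or its inverse.
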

\begin{proof}
	We first prove a formula for the matrix differentiation.
	Denote by $\bm{A}=\bm{A}(p)$ a Hermitian matrix which depends on the real parameter $p$.
	Then we aim to solve the derivative of $\bm{A}^{2/p}$ with respect to $p$.
	Let $\bm{Y} = \bm{A}^{2/p}$. Then $\log \bm{Y} =  \log \bm{A} \cdot 2/p$.
	Differentiating on both sides with respect to $p$ and applying the chain rule of the Fr\'echet derivatives, the above expression leads to
	\begin{align} \label{eq:Var_Ent1}
	\begin{split}
	\frac{\mathrm{d}}{\mathrm{d}p} \log \bm{Y}
	&= \int_0^\infty (s\bm{I} + \bm{Y})^{-1} \cdot \frac{\mathrm{d}}{\mathrm{d}p} \bm{Y} \cdot (s\bm{I} + \bm{Y})^{-1} \mathrm{d} s 
	=\frac{\mathrm{d}}{\mathrm{d}p}  \log \bm{A}\cdot 2/p \\
	&= -\frac2{p^2} \log\bm{A}
	+ \frac2p \int_0^\infty (t\bm{I} + \bm{A})^{-1} \cdot \frac{\mathrm{d}}{\mathrm{d}p} \bm{A} \cdot (t\bm{I} + \bm{A})^{-1} \mathrm{d} t.
	\end{split}
	\end{align}
	Note that $\mathsf{T}_{\bm{D}}(\bm{K})\triangleq \int_0^\infty (s\bm{I} + \bm{D})^{-1} \bm{K} (s\bm{I} + \bm{D})^{-1} \mathrm{d} s$ is called the \emph{Bogoliubov-Kubo-Mori operator} and its inverse is well-known to be
$	\mathsf{T}_{\bm{D}}^{-1}(\bm{L}) = \int_0^1 \bm{D}^s \bm{L} \bm{D}^{1-s} \mathrm{d} s$ (see e.g. \cite[Appendix C.2]{HKP+13}),
	from which Eq.~\eqref{eq:Var_Ent1} yields
	\begin{align*}
	\frac{\mathrm{d}}{\mathrm{d}p} \bm{Y}
	&= \int_0^1 \bm{Y}^s \left[   -\frac2{p^2} \log\bm{A}
	+ \frac2p \int_0^\infty (t\bm{I} + \bm{A})^{-1} \cdot \frac{\mathrm{d}}{\mathrm{d}p} \bm{A} \cdot (t\bm{I} + \bm{A})^{-1} \mathrm{d} t   \right] \bm{Y}^{1-s} \,\mathrm{d} s \\
	&= -\frac2{p^2} \bm{A}^{2/p} \cdot \log\bm{A}
	+  \frac2p \int_0^1 \int_0^\infty \bm{A}^{2s/p} 
	(t\bm{I} + \bm{A})^{-1} \cdot \frac{\mathrm{d}}{\mathrm{d}p} \bm{A} \cdot (t\bm{I} + \bm{A})^{-1}     \bm{A}^{2/p-2s/p} \, \mathrm{d} t \,\mathrm{d} s.
	\end{align*}
	Now, by taking $\bm{A} \equiv \mathbb{E}\big[ \bm{Z}^p \big]$ we have
	$\frac{\mathrm{d}}{\mathrm{d}p} \bm{A} = \frac{\mathrm{d}}{\mathrm{d}p}  \mathbb{E}\big[\bm{Z}^p\big] = \mathbb{E}\big[ \bm{Z}^p \cdot \log \bm{Z} \big]$
	and
	\begin{align} \label{eq:Var_Ent2}
	\begin{split}
	\frac{\mathrm{d}}{\mathrm{d}p} \Big( \mathbb{E}\big[\bm{Z}^p\big] \Big)^{2/p}
	&= -\frac2{p^2} \Big( \mathbb{E}\big[\bm{Z}^p\big] \Big)^{2/p} \log \mathbb{E} \big[ \bm{Z}^p \big] \\
	&+ \frac2p \int_0^1 \int_0^\infty \bm{A}^{2s/p} 
	(t\bm{I} + \bm{A})^{-1} \cdot \mathbb{E}\big[ \bm{Z}^p \cdot \log \bm{Z} \big] \cdot (t\bm{I} + \bm{A})^{-1}     \bm{A}^{2/p-2s/p} \, \mathrm{d} t \,\mathrm{d} s.
	\end{split}
	\end{align}
	Finally, we are ready to prove our claim. L'H\^opital's rule implies
	\begin{align*}
	&\lim_{p\to 2^-} \frac{\textnormal{\textbf{Var}}_p[\bm{Z}]}{2-p}
	= \left. 	\frac{\mathrm{d}}{\mathrm{d}p} \Big( \mathbb{E}\big[\bm{Z}^p\big] \Big)^{2/p}\right|_{p= 2} \\
	&= - \frac12 \mathbb{E}\big[\bm{Z}^2\big] \cdot \log \mathbb{E}\big[ \bm{Z}^2 \big]
	+ \int_0^1 \int_0^\infty \bm{A}^{s} 
	(t\bm{I} + \bm{A})^{-1} \cdot \mathbb{E}\big[ \bm{Z}^p \cdot \log \bm{Z} \big] \cdot (t\bm{I} + \bm{A})^{-1}     \bm{A}^{1-s} \, \mathrm{d} t \,\mathrm{d} s \\
	&= \frac12 \mathbb{E}\Big[ \bm{Z}^2\log \bm{Z}^2 \Big] - \frac12 \mathbb{E}\big[\bm{Z}^2\big] \cdot \log \mathbb{E}\big[ \bm{Z}^2 \big],
	\end{align*}
	completing the proof.
\end{proof}

\begin{lemm} \label{lemm:eta}
	Fix sample spaces $\mathcal{X}$ and $\mathcal{Y}$.
	For every distribution $\mu\in\mathscr{P}(\mathcal{X})$, Markov kernel $K\in\mathscr{M}(\mathcal{Y}|\mathcal{X})$ and matrix-valued function $\bm{f}:\mathcal{X} \to \mathbb{M}_d^+$, we have the following  inequality:
	\begin{align*}
	H_\Phi ( K^* \bm{f}) &= \tr\Big[ \mathbb{E}_{\mu K} \big[ \Phi ( K^* \bm{f}) \big]- \Phi\left( \mathbb{E}_{\mu K} \big[ K^* \bm{f} \big] \right)  \Big]
	\leq \tr\Big[ \mathbb{E}_\mu  \Phi (\bm{f})  - \Phi\left( \mathbb{E}_\mu  \bm{f} \right) \Big]
		= 	H_\Phi ( \bm{f}),
	\end{align*}
	where $\mu K$ and $K^*$ are defined in Eqs.~\eqref{eq:kernel} and \eqref{eq:adjoint}. 
\end{lemm}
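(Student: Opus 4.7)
The plan is to reduce this to a one-line consequence of the law of total variance (equation~\eqref{eq:law_total}) together with the non-negativity of the (conditional) matrix $\Phi$-entropy. First I would observe the bookkeeping identity $\mathbb{E}[\bm{f}(X)\mid Y=y]=K^{\ast}\bm{f}(y)$, which is immediate from the definition of the adjoint kernel in \eqref{eq:adjoint} and \eqref{eq:K_exp_adjoint}, and the resulting tower identity $\mathbb{E}_{\mu K}[K^{\ast}\bm{f}]=\mathbb{E}_{\mu}[\bm{f}]$. In particular the second terms of the two entropies in the claim, namely $\tr\Phi\!\left(\mathbb{E}_{\mu K}[K^{\ast}\bm{f}]\right)$ and $\tr\Phi\!\left(\mathbb{E}_{\mu}\bm{f}\right)$, coincide, so the claim collapses to the trace inequality $\tr\mathbb{E}_{\mu K}\Phi(K^{\ast}\bm{f})\le\tr\mathbb{E}_{\mu}\Phi(\bm{f})$.

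Next I would invoke the law of total variance for matrix $\Phi$-entropies from \eqref{eq:law_total}, specialised to $\bm{Z}=\bm{f}(X)$ and the random variable $Y$:
\[
H_\Phi(\bm{f}(X))\;=\;\mathbb{E}_{Y}\bigl[H_\Phi(\bm{f}(X)\mid Y)\bigr]\;+\;H_\Phi\!\left(\mathbb{E}[\bm{f}(X)\mid Y]\right)\;=\;\mathbb{E}_{Y}\bigl[H_\Phi(\bm{f}(X)\mid Y)\bigr]\;+\;H_\Phi(K^{\ast}\bm{f}).
\]
The lemma then follows the moment I know the conditional matrix $\Phi$-entropy $H_\Phi(\bm{f}(X)\mid Y=y)$ is non-negative for every~$y$. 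This non-negativity is pure Jensen: because $\Phi$ is convex on $[0,\infty)$, Proposition~\ref{prop:trace} guarantees that the scalar map $\bm{A}\mapsto\tr\Phi(\bm{A})$ is convex on the relevant set of Hermitian matrices, and therefore
\[
\tr\Phi\!\left(\mathbb{E}[\bm{f}(X)\mid Y=y]\right)\;\le\;\mathbb{E}\!\left[\tr\Phi(\bm{f}(X))\,\big|\,Y=y\right],
\]
which is exactly $H_\Phi(\bm{f}(X)\mid Y=y)\ge 0$. Substituting back gives $H_\Phi(\bm{f})\ge H_\Phi(K^{\ast}\bm{f})$, as required.

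There is no real obstacle here; the only decision is stylistic, between this ``telescope via total variance'' route and the equally short route that applies Jensen directly to each fibre $K^{\ast}\bm{f}(y)$ and then averages over $Y\sim\mu K$. I would prefer the total-variance route because it makes transparent where the slack comes from (it is precisely the average conditional $\Phi$-entropy $\mathbb{E}_{Y}[H_\Phi(\bm{f}\mid Y)]$), which dovetails cleanly with the SDPI-style formulation in Proposition~\ref{prop:functional_SDPI_cqstate} that this lemma is designed to support.
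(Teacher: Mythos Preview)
Your proof is correct. The paper takes precisely the alternative route you mention at the end: it applies Jensen's inequality (via the convexity of $\bm{A}\mapsto\tr\Phi(\bm{A})$) directly to each fibre, obtaining $\tr[\Phi(K^{\ast}\bm{f}(y))]\le\tr[K^{\ast}\Phi(\bm{f})(y)]$, then averages over $y\sim\mu K$ and uses the tower identity $\mathbb{E}_{\mu K}[K^{\ast}\Phi(\bm{f})]=\mathbb{E}_{\mu}[\Phi(\bm{f})]$ to reach $\tr\mathbb{E}_{\mu K}\Phi(K^{\ast}\bm{f})\le\tr\mathbb{E}_{\mu}\Phi(\bm{f})$; the common second term is handled exactly as you do. The underlying ingredients are therefore identical --- Jensen for the trace function plus the adjoint/tower relation --- and your route simply repackages them through the law of total variance~\eqref{eq:law_total}. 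Your version has the advantage you already note: it exhibits the gap $H_\Phi(\bm{f})-H_\Phi(K^{\ast}\bm{f})$ explicitly as $\mathbb{E}_Y[H_\Phi(\bm{f}(X)\mid Y)]$, which is exactly the quantity appearing in Proposition~\ref{prop:functional_SDPI_cqstate}, so the connection to the SDPI constant is immediate rather than implicit.
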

\begin{proof}
	We first observe that Jensen's inequality {\cite[Section 5]{FZ07}} holds for all convex functions $\Phi$:
	$\tr \left[ K^* \Phi (\bm{f}) \right] \geq \tr \left[ \Phi \left(K^* \bm{f}\right) \right]$.
	After taking expectation with respect to $\mu K$, direct calculation shows that
	\begin{align*}
	\tr \mathbb{E}_{\mu K}\left[ K^* \Phi\left( \bm{f} \right) \right]
	&= \sum_{y\in\mathcal{Y}} \mu K(y) \cdot \tr \left[ K^* \Phi \circ \bm{f} (y) \right] 
	= \sum_{y\in\mathcal{Y}} \mu K(y) \cdot \tr \left[ \sum_{x\in\mathcal{X}} \frac{K(y|x)\mu(x)}{\mu K (y)} \Phi \big(\bm{f} (x) \big) \right] \\
	&= \sum_{x\in\mathcal{X}} \mu(x)  \tr \left[ \Phi \big(\bm{f} (x) \big) \right] 
	= \tr \mathbb{E}_\mu \left[\Phi( \bm{f}) \right] 
	\geq \sum_{y\in\mathcal{Y}} \mu K (y) \cdot \tr \left[ \Phi\big( K^* \bm{f} \big) \right]\\
	&= \tr \mathbb{E}_{\mu K} \big[ \Phi ( K^* \bm{f}) \big]
	\end{align*}
	(note that we freely interchange the order of trace and expectation by Fubini's theorem).
	Together with the fact that $\mathbb{E}_{\mu K} \big[ K^* \bm{f} \big] = \mathbb{E}_\mu  \bm{f}$, we complete our claim.	
\end{proof}

\section*{Acknowledgements}
MH would like to thank Matthias Christandl, Michael Kastoryano, Robert Koenig, Joel Tropp, and Andreas Winter for their useful comments. 
MH was supported by ARC Future Fellowship under Grant FT140100574 and by US Army Research Office for Basic Scientific Research Grant W911NF-17-1-0401. 
HC was supported by Ministry of Science and Technology Overseas Project for Post Graduate Research under Grant 108-2917-I-564-042.




\end{document}